\newcommand{\IR}{\mathbb{R}}
\def\s{{\mathcal S}}
\def\OO{{\cal O}} 
\def\Q{{\cal Q}}
\def\A{{\cal A}}
\def\I{{\mathcal I}}
\def\K{{\mathcal K}}
\def\P1{{\mathcal P_1}}
\def\P2{{\mathcal P_2}}
\def\P{{\mathcal P}}
\def\B{{\mathcal B}}
\def\QH{{\mathcal Q(\sigma)}}
\DeclareMathOperator{\NC}{\textsc{Near-Center}}
\DeclareMathOperator{\RIR}{\textsc{Randomized-Iterative-Reweighting}}
\newtheorem{corollary}{Corollary}
\newtheorem{clm}{Claim}
\newtheorem{theorem}{Theorem}
\newtheorem{lemma}{Lemma}
\DeclareMathOperator{\BPA}{\textsf{BestPoint-Algorithm}}
\newcommand{\blue}{\textcolor{blue}}
\begin{document}
\title{Online Hitting of Unit Balls and  Hypercubes in $\IR^d$ using  Points from~$\mathbb{Z}^d$\thanks{Preliminary version of this paper appeared in the 28th international computing and combinatorics conference (COCOON), 2022~\cite{DeS22}.
   }}

\author[1]{Minati De\footnote{Partially supported by SERB-MATRICS grant MTR/2021/000584.}}
\affil{Deptartment of Mathematics\\ Indian Institute of  Technology Delhi, India\\
\texttt{\{minati,satyam.singh$^{\star}$\}@maths.iitd.ac.in}}
 
\author[1]{Satyam Singh$^{\star}$\footnote{Supported by CSIR (File Number-09/086(1429)/2019-EMR-I).

$\ ^{\star}$Corresponding author}}

\maketitle          
\begin{abstract}
We consider the online hitting set problem for the range space $\Sigma=(\cal X,\cal R)$, where the point set $\cal X$ is known beforehand, but the set $\cal R$ of geometric objects is not known in advance.
Here, objects {from $\cal R$} arrive one by one. The objective {of the problem} is to maintain a hitting set of the minimum cardinality by taking irrevocable decisions. 
In this paper, we {consider} the problem when objects are unit balls or unit hypercubes in $\IR^d$, and the points from $\mathbb{Z}^d$ are used  for hitting them.
{First, we address the case} when objects are unit intervals in $\mathbb{R}$ and present an optimal deterministic algorithm with competitive ratio of~$2$.
Then, {we consider the case} when objects are unit balls. For hitting unit balls in $\IR^2$ and $\IR^3$, we present $4$ and $14$-competitive deterministic algorithms, respectively. On the other hand, for hitting unit balls in $\IR^d$, we propose a $O(d^4)$-competitive deterministic algorithm, and { we demonstrate that}, for $d<4$,  the competitive ratio of any deterministic algorithm is at least $d+1$.
In the end, we {explore the case where} objects are unit hypercubes. For hitting unit hypercubes in $\IR^2$ and $\IR^3$, we obtain $4$ and $8$-competitive deterministic algorithms, respectively. For hitting unit hypercubes in $\IR^d$ ($d\geq 3$), we present a $O(d^2)$-competitive randomized algorithm. {Furthermore,} we prove that the competitive ratio of any deterministic algorithm for the problem is at least $d+1$ for any  $d\in\mathbb{N}$.

\noindent
{\textit{\textbf{Keywords. }}}{Competitive ratio, Geometric objects, Hitting set, Online algorithm, Unit covering.}
\end{abstract}

\section{Introduction}
The hitting set problem and the set cover problem are one of the most fundamental problems in combinatorial optimization~\cite{Alon09,EvenS14,Feige98,amitkumar,HochbaumM85}.
Let  $\Sigma=(\cal X,\cal R)$  be a \emph{range space} where $\cal X$ is a set of \emph{elements} and $\cal R$ is a family of subsets of $\cal X$ called \emph{ranges}. A subset $\cal H \subseteq \cal X$ is called a \emph{hitting set} of the range space $\Sigma$ if the set $\cal H$ intersects every range $r$ in $\cal R$ and a subset $\cal C \subseteq \cal R$ is called a \emph{set cover} of the range space $\Sigma$ if the union of ranges in $\cal C$ covers all elements of $\cal X$. The aim of  the hitting set (respectively, set cover) problem is to find a hitting set $\cal H$ (respectively, set cover $\cal C$)  of the minimum cardinality. 
It is well known that a set cover of $\Sigma =({\cal X}, {\cal R})$ is a hitting set of the dual range space $\Sigma^{\perp}=({\cal X}^{\perp}, {\cal R}^{\perp})$. Here, for each range $r\in{\cal R}$ there is an element in ${\cal X}^{\perp}$ and for each element $x\in{\cal X}$ there is a range $r_x$, namely, $r_x=\{r\in{\cal R}\ |\ x\in r\}$, in ${\cal R}^{\perp}$~\cite{AgarwalP20}.

 Due to numerous applications in wireless sensor networks, VLSI design, resource allocation and databases,   researchers have considered the set $\cal X$ to be a collection of points from $\IR^d$ and $\cal R$ to be a finite family of geometric objects chosen from some infinite class (hypercubes, balls, etc.)~\cite{AgarwalP20,ChanH20,FriederichGGHS23,Ganjugunte11,MegiddoS84,MustafaR10}. In this case, ranges are ${\cal X} \cap r$ for any object $r \in\cal R$. With a slight misuse of the notation, we will use $\cal R$ to signify both the set of ranges as well as the set of objects that define these ranges. A \emph{geometric range space} $\Sigma=(\cal X, \cal R)$ consists of a point set $\cal X$ containing points and a set $\cal R$ {is a family of} geometric objects. The \emph{geometric hitting set} problem is to find the minimum number of points from $\cal X$ to hit all the objects in $\cal R$. The \emph{geometric set cover} problem is to find the minimum number of objects in $\cal R$ that covers all the points in $\cal X$.

For the geometric hitting set problem in an online setting, the point set $\cal X$ is known beforehand, but the set $\cal R$ of geometric objects is not known in advance.
Here, objects from $\cal R$ arrive one by one. 
An online algorithm needs to maintain a feasible hitting set $\cal H$ for the already arrived  objects. Upon the arrival of a new  object $\sigma\in\mathcal{R}$, if $\sigma$ does not contain any point from the existing hitting set $\cal H$, the algorithm needs to add a point $p\in\cal X$ to $\cal H$ to hit $\sigma$.
The decision to add a point to the solution set is  irrevocable, i.e., the online algorithm can not remove any point from the existing hitting set in future. 
{Due to the result of Even and Smorodinsky\cite{EvenS14}, we know that no online algorithm can obtain a competitive ratio better than $\Omega(\log n)$ for hitting $n$ intervals in the range $[1,n]$ using points $\P=\{1,2,\ldots,n\}$. Due to this pessimistic result, in this paper, we consider the geometric hitting set problem in an online setting, where ${\cal X}=\mathbb{Z}^d$ and the set $\cal R$ is a finite family of translates of an object $\sigma^*$ in $\IR^d$.}
For simplicity, we will use the term online hitting set problem (respectively, online covering problem) instead of geometric hitting set problem in the online setup (respectively, geometric set cover problem in the online setup).

{One real-life application of the hitting set problem is as follows. Let us consider a planned city where one can install base stations at specific locations from a rectilinear grid. 
Here, points represent base stations, and ranges represent objects centred at clients. The clients are coming one by one and upon the arrival of an uncovered client, from any location in the city, the algorithm must select a base station serving it.
The objective is to minimize the number of base stations.  Since installing a base station is expensive, the decision is considered to be irrevocable.}

We use competitive analysis to analyze the quality of our online algorithm~\cite{BorodinE}. Let $\A$ be an online algorithm for a minimization problem. The algorithm $\cal A$ is said to be \emph{$c$-competitive}, if 
$c=\sup_{\beta}\frac{\A_{\beta}}{\OO_{\beta}}$,  where $\A_{\beta}$ and $\OO_{\beta}$ are  the costs of the solution produced by the online algorithm $\A$ and  an optimal offline algorithm, respectively, with respect to an input sequence $\beta$. If $\A$ is a  randomized algorithm, then $\A_{\beta}$ is replaced by the expectation ${\mathbb{E}[\A_{\beta}]}$, and the 
competitive ratio of  $\A$ is  $\sup_{\beta}\frac{\mathbb{E}[\A_{\beta}]}{\OO_{\beta}}$~\cite{BorodinE}.

\subsection{Our Contributions}
{We consider the online hitting set problem when ${\cal X}={\mathbb Z}^d$ and $\cal R$ consists of translated copies of a  geometric object in $\IR^d$.
For lower dimensional objects,  we propose a deterministic online algorithm {$\BPA$}. The general overview of the algorithm is as follows.}

{Depending upon the objects and dimensions, we consider a \emph{filter-set}: a subset $\chi$ of integer points such that 
any input object must contain at least one point of $\chi$.
Our algorithm maintains a hitting set $\A$ consisting of points from $\chi$. Initially $\A=\emptyset$. On receiving a new input  object $\sigma$, if it is not hit by any of the points from $\A$, our online algorithm adds the \emph{best-point} from $\chi$ lying inside $\sigma$ to the set $\A$. For the definition of best-point, we refer to Section~\ref{1.2}.}

{\begin{enumerate}
   \item   When $\cal{R}$ consists of  one-dimensional unit intervals, we have a   $\BPA$ algorithm achieving an optimal competitive ratio of $2$ (Theorem~\ref{thm:int}). 
    \item When $\cal R$ consists of unit balls in $\IR^2$ and $\IR^3$, respectively, we have  $\BPA$ algorithms having  competitive ratios of at most~$4$ and $14$, respectively (Theorem~\ref{2d-balls} and Theorem~\ref{3d-ball}).
        \item When $\cal R$ consists of unit hypercubes in $\IR^2$ and $\IR^3$, respectively, we have  $\BPA$ algorithms having competitive ratios of at most $4$ and $8$, respectively (Theorem~\ref{square_ub} and Theorem~\ref{cube_ub}).
\end{enumerate}}

 {When $\cal R$ consists of unit balls in $\IR^d$,  we propose a deterministic online algorithm {$\NC$} that works
 as follows.
On receiving a new input object $\sigma\subset\mathbb{R}^d$ centered at $c$, 
if it has not been hit by the existing hitting set, then our online algorithm adds the nearest integer point from the center $c$ as the hitting point. If ties happen, our algorithm arbitrarily chooses one of the nearest points as the hitting point. We show that this algorithm achieves a competitive ratio of at most $O(d^4)$ (Theorem~\ref{ball_ub}).}

{ When $\cal R$ consists of unit hypercubes in $\IR^d$, the algorithm $\NC$ achieves an exponential competitive ratio due to the following reasons.
        Let $p$ be a point in offline optimum. Let $\I_p$ be the collection of input hypercubes containing the point $p$. Notice that the center of any hypercube in $\I_p$ lies in  a unit hypercube $H$ centered at $p$. Since $H$ contains exactly $3^d$ integer points, the algorithm $\NC$ might place at most $3^d$ points to hit all the objects in~$\I_p$.}

        {To obtain a better competitive ratio for unit hypercubes in $\IR^d$ ($d\geq 3$), we propose an algorithm, $\textsc{Randomized-}$ $\textsc{Iterative-Reweighting}$, that is similar in nature to an algorithm proposed by Dumitrescu and T{\'{o}}th in~\cite{DumitrescuT22}. Using some  structural properties, we  analyze  this randomized algorithm and show that it has a  competitive ratio of at most~$O(d^2)$ (Theorem~\ref{hyp_ub}).}

{Additionally, we investigated the lower bounds of the hitting set problem for unit balls and unit hypercubes in $\IR^d$, and obtained the following results. When $\cal R$ consists of unit balls in $\IR^d$ ($d<4$) and unit hypercubes in $\IR^d$, we show that every deterministic algorithm has a competitive ratio of at least~$d+1$ (Theorem~\ref{ball_lb} and Theorem~\ref{hyp_lb}).}

 {All the above-mentioned outcomes also hold for the equivalent  geometric set cover problem in the online setup.  A summary of all results obtained in this paper for the online hitting set problem is presented  in Table~\ref{tab:1}.}

 \begin{table}[htbp]
    \centering
    \begin{tabular}{|p{4.5 cm}|p{5.25 cm}|p{5.25 cm}|}
\hline \textbf{Ranges/Objects}  & \textbf{Lower Bound of Competitive Ratio} & \textbf{Upper Bound of Competitive Ratio} \\
\hline
\hline Unit Intervals  & 2 (Theorem~\ref{thm:int})& 2 (Theorem~\ref{thm:int})\\
\hline Unit Disks  & 3 (Theorem~\ref{ball_lb})&4 (Theorem~\ref{2d-balls})\\
\hline Unit Balls in $\IR^3$ & 4 (Theorem~\ref{ball_lb})&14 (Theorem~\ref{3d-ball}).\\
\hline Unit Balls in $\IR^d$ & 4 (Theorem~\ref{ball_lb})& $O(d^4)$ (Theorem~\ref{ball_ub})\\
\hline Unit Squares & 3 (Theorem~\ref{hyp_lb}) & 4 (Theorem~\ref{square_ub})\\
\hline Unit Cubes & 4 (Theorem~\ref{hyp_lb})& 8 (Theorem~\ref{cube_ub})\\
\hline Unit Hypercubes in $\IR^d, d\geq 3$ & $d+1$ (Theorem~\ref{hyp_lb}) & $O(d^2)$ (Theorem~\ref{hyp_ub}) \\
\hline
\end{tabular}
    \caption{{Summary of the results obtained in this paper for the online hitting set problem.}}
    \label{tab:1}
\end{table}

\subsection{Related Work}
The hitting set and set cover problems are classical NP-hard problems~\cite{Karp}.
In the offline setup, if the set $\cal X$ contains points on the real line and $\cal R$ consists of intervals in $\mathbb{R}$, the set cover problem can be solved in polynomial time using a greedy algorithm{~\cite{golumbic}}. However, these problems remain  NP-hard, even when $\cal R$ consists of simple geometric objects like unit disks in $\mathbb{R}^2$~\cite{FOWLER1981133} and ${\cal X}$ is a set of points in ${\IR}^2$. 
Alon et al.~\cite{Alon09} initiated the study of the set cover problem in the online setup. They considered the model where {both} sets $\cal X$ and $\cal R$ are already known, but the order of arrivals of points in $\cal X$ is unknown. Upon the arrival of an uncovered point in $\cal X$, the online algorithm must
choose a range $r\in \cal R$ that covers the point.
The algorithm presented by Alon et al.~\cite{Alon09} has a competitive ratio of $O(\log n \log m)$.
Later, Even and Smorodinsky~\cite{EvenS14} studied the online hitting set problem, where {both} sets $\cal X$ and $\cal R$ are known in advance, but the order of arrival of the input objects in $\cal R$ is unknown. They proposed online algorithms having a competitive ratio of $O(\log n)$ when $\cal R$ consists of half-planes and unit disks in $\IR^2$. They gave matching lower bounds of the competitive ratio for these cases.  They also proposed an online algorithm that achieves an optimal bound of $\Theta(\log n)$ when $\cal R$ consists of intervals in the range $[1,n]$ and  $\cal X$ consists of all  integers in the range $[1,n]$.
In this paper, we 
consider online hitting set problem where  ${\cal X}=\mathbb{Z}^d$ and objects in $\cal R$ consists of unit balls (and hypercubes) in $\IR^d$.
 We consider the model in which $\cal X$ is known in advance, but  objects in $\cal R$ are not known beforehand.

 A variant of the set cover problem  is known as the \emph{unit covering problem} where $\cal X$ is a set of points in $\IR^d$ and the set $\cal R$ consists of all (infinite) possible translated copies of a  given unit object $\sigma^*$ in $\IR^d$. In the online version of the unit covering problem, the set $\cal X$ is not known in advance.
 Charikar et al.~\cite{CharikarCFM04} studied the online version of the unit covering problem where $\sigma^*$ is a unit ball in $\IR^d$.  
 They proposed an online algorithm having a competitive ratio of~$O(2^dd\log d)$. They also proved $\Omega(\log d/\log\log \log d)$ as the lower bound for this problem. Dumitrescu et al.~\cite{DumitrescuGT20} improved both the upper and lower bound of the competitive ratio to $O({1.321}^d)$ and $\Omega(d+1)$, respectively. {In particular, they obtained 5 and 12 competitive ratios, when $\sigma^*$ is a unit ball in $\IR^2$ and $\IR^3$, respectively.} When $\sigma^*$ is a centrally symmetric convex object in $\mathbb{R}^d$,   they proved that the competitive ratio of every deterministic online algorithm is at least $I(\sigma^*)$, where $I(\sigma^*)$ is the illumination number (for definition, see~\cite{DumitrescuGT20}) of the convex object $\sigma^*$. 
 When $\sigma^*\subset \IR^d$ is any object having  aspect$_\infty$ ratio (for definition see~\cite{DeJKS22}) as $\alpha$, 
  a deterministic online algorithm  is known as having a competitive ratio of at most~$\left(\frac{2}{\alpha}\right)^d\left((1+{\alpha})^d-1\right)$ $\log_{(1+\alpha)}(\frac{2}{\alpha}) +1$~\cite{DeJKS22}. Note that the aspect$_\infty$ ratio of any object is in the range $(0,1]$.
Dumitrescu and T{\'{o}}th~\cite{DumitrescuT22} studied another variant of the online unit covering problem where $\cal X$ is a set of points in $\mathbb{Z}^d$. They consider the case when $\sigma^*$  is a hypercube of side length one unit  in $\mathbb{R}^d$.  They~\cite{DumitrescuT22} proved that the competitive ratio of every deterministic online algorithm for this problem is at least $d + 1$.
 They also proposed a randomized online algorithm with a competitive ratio of $O(d^2)$ for this problem. 
For this problem, an equivalent version of the online hitting set problem is as follows:  ${\cal X}=\IR^d$ and the center of the objects in $\cal R$ are from $\mathbb{Z}^d$. To complement their result, in this paper, we consider the online hitting set problem when the ${\cal X}=\mathbb{Z}^d$ and the center of objects in $\cal R$ are from $\IR^d$.

\subsection{Notation and Preliminaries}\label{1.2}
 We use $[n]$ to denote the set $\{1,2,\ldots,n\}$. 
By an \emph{object}, we refer to  a simply connected compact set in $\mathbb{R}^d$ having a nonempty interior. {For any point $p\in\IR^d$, we use $p(x_i)$ to denote the $i$th coordinate of $p$, where $i\in[d]$.}
 An \emph{integer point} is a point $p \in  \mathbb{R}^d$  such that for each $i\in[d]$ the coordinate $p(x_i)$ is an integer. Any two integer points $p$ and $q$ are said to be \emph{consecutive integer points} if there exists an index $j\in[d]$ such that $|p(x_j)-q(x_j)|=1$ and  $p(x_i)=q(x_i)$ for all $i\in[d]\setminus\{j\}$.
  We use $\Q(\sigma)$ to denote the set of integer points contained in an object $\sigma$. For any $\chi\subset\mathbb{Z}^d$, the term $\chi(\sigma)$ denotes the intersection of $\chi$ and $\Q(\sigma)$.
 
The term \emph{integer hypercube} refers to a 
 hypercube $H\subset\mathbb{R}^d$ of side length one having all corners as integer points. {We use $dist(x,y)$ (respectively, ${dist_{\infty}}(x,y)$) to represent the distance between two points $x$ and $y$ under $L_2$-norm (respectively, $L_{\infty}$-norm).} Let $c$ be a point  in $\IR^d$.
We use $H_d(c,r)$ to denote an $L_{\infty}$ ball of radius $r$ centered at $c$. In other words,  $H_d(c,r)=\{x\in\mathbb{R}^d:dist_{\infty}(x,c) \leq r\}$.
 A \emph{unit hypercube}  $H_d(c,1)\subset \mathbb{R}^d$ centered at $c$, is defined as $H_d(c,1)=\{x\in\mathbb{R}^d:dist_{\infty}(x,c) \leq 1\}$. Note that, according to our definition, an integer hypercube is not a unit hypercube.  A \emph{unit ball}  $B_d(c,1)\subset \mathbb{R}^d$ centered at $c$, is defined as $B_d(c,1)=\{x\in\mathbb{R}^d:dist(x,c)\leq 1\}$.
 Throughout the paper, if not stated otherwise, the term {hypercube} is used to refer to an axis-aligned unit hypercube and the term {ball} is used to refer to a unit ball.

 {Let us define  a `relation' $\prec$ among distinct points in $\IR^d$ as follows.
Note that for any pair of distinct points $p$ and $q$ in $\IR^d$, there exists a unique index $i\in [d]$ such that  $p(x_{i})\neq q(x_{i})$ and  $p(x_j)=q(x_j)$ for each $j\in \{i+1,\ldots,d\}$.
 If $p(x_{i})< q(x_{i})$,  we say that $p \prec q$; otherwise  $q \prec p$. Note that this gives a strict total ordering for any set $P\subset \IR^d$ of distinct elements.
 For a set $P$ of distinct points, a point $p^*\in P$ is defined as the \emph{best-point}   if $q \prec p^{*}$, for all $q(\neq p^{*})\in P$.}

\subsection{Organization}
In Section~\ref{sec:interval}, we present the lower and upper bound of the competitive ratio for hitting one-dimensional intervals.
Next, in Section~\ref{sec:d-ball}, for hitting unit balls in $\IR^d$, we give the lower and upper bound of the competitive ratio. 
Section~\ref{sec:d-hyp} consists of the lower and upper bound of the competitive ratio for hitting axis-aligned unit hypercubes in $\IR^d$.
Later, in Section~\ref{sec:cover}, we summarize the results obtained for the unit covering problem.
 Eventually, in Section~\ref{Conclusion}, we conclude.


\section{Hitting Set Problem for Unit Intervals}\label{sec:interval}

 {We first} consider when objects are one-dimensional unit hypercubes, i.e., unit intervals.

\begin{theorem}\label{thm:int}
For hitting unit intervals using points from $\mathbb{Z}$, there exists a deterministic online algorithm that achieves a competitive ratio at most $2$. This result is tight: the competitive ratio of any
deterministic online algorithm for this problem is at least $2$.
\end{theorem}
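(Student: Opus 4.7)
My plan is to handle the two directions of the theorem independently.

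For the upper bound, I would instantiate $\BPA$ with the filter-set $\chi = 2\mathbb{Z}$ (the even integers). Since a unit interval $\sigma = [c-1, c+1]$ has length $2$, it contains at least two consecutive integers, one of which must be even; hence $\chi(\sigma) \neq \emptyset$ for every input, so the algorithm is well-defined. The $\prec$-order on $\IR^1$ coincides with the usual $<$, so the best-point in $\chi(\sigma)$ is the largest even integer inside $\sigma$, which $\BPA$ adds to $\A$ whenever a new uncovered interval arrives. The competitive analysis would proceed by a standard charging argument: fix an offline optimum $\OO^*$; for each $p \in \A$, let $\sigma_p$ be the interval that triggered $p$'s addition, and charge $p$ to some arbitrary $q_p \in \OO^* \cap \sigma_p$ (which is nonempty since $\OO^*$ hits $\sigma_p$). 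The heart of the argument is to bound the charges received by each $q \in \OO^*$. Because $p$ and $q$ both lie in the length-$2$ interval $\sigma_p$, $|p - q| \leq 2$; and $p$ is even. If $q$ is odd, the only admissible values are $p \in \{q-1, q+1\}$. If $q$ is even, a priori $p \in \{q-2, q, q+2\}$, but $p = q-2$ is ruled out, since $q$ itself is an even integer of $\sigma_p$ that is larger than $q-2$, contradicting that $p$ is the \emph{largest} even integer in $\sigma_p$; hence $p \in \{q, q+2\}$. Either way at most two algorithm points are charged to $q$, giving $|\A| \leq 2\,|\OO^*|$.

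For the lower bound, I plan to use a two-round adversary. Release $\sigma_1 = [-0.5, 1.5]$, whose only integer points are $0$ and $1$, so the algorithm must add some $a_1 \in \{0, 1\}$. If $a_1 = 0$, follow with $\sigma_2 = [0.6, 2.6]$, whose integer points are $\{1,2\}$, forcing a second addition; since $1 \in \sigma_1 \cap \sigma_2$, the offline optimum is the single point $\{1\}$. Symmetrically, if $a_1 = 1$, follow with $\sigma_2 = [-1.6, 0.4]$, and the offline optimum is $\{0\}$. In either branch the online algorithm uses two points against an offline optimum of one, giving a ratio of $2$.

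The step I expect to be the main obstacle is the choice of the filter. With the full lattice $\chi = \mathbb{Z}$ and the ``largest-integer'' best-point rule, an adversary can present the interval $[q, q+2]$ first, then an interval whose largest integer is $q+1$ but which misses $q+2$, and finally an interval whose largest integer is $q$ and which misses both $q+1$ and $q+2$. This order forces $\BPA$ to add $q+2$, $q+1$, and $q$ in succession, each triggered by an interval containing $q$, so the single offline point $q$ absorbs three charges and the bound degrades to $3$. Restricting to $\chi = 2\mathbb{Z}$ preserves the coverage guarantee $\chi(\sigma) \neq \emptyset$ (the gap between consecutive filter points is $2$, matching the interval length) while eliminating exactly this surplus charge, yielding the tight ratio of $2$.
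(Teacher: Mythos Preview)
Your proposal is correct and follows essentially the same approach as the paper: the upper bound uses the filter-set $\chi=2\mathbb{Z}$ with a charging/local-counting argument showing each optimum point absorbs at most two algorithm points, and the lower bound is a two-round adaptive adversary forcing two hits where one suffices. The only cosmetic differences are that the paper phrases the upper bound via the sets $\A_p$ rather than an explicit charging scheme, and its first adversarial interval $[x,x+2]$ contains three integers (yielding three cases) whereas your $[-0.5,1.5]$ contains two, which slightly streamlines the case analysis.
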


\begin{proof}
We first prove the upper bound of the competitive ratio.
Let $\Lambda=\{q{\bf e}_1\ |\ q\in \mathbb{Z}\}$ be the integer lattice generated by standard unit vector ${\bf e}_1$. Partition the whole integer lattice using integer point from $\chi=\{2q {\bf e_1}\ |\ q\in \mathbb{Z}\}$. Note that any unit interval can contain at least one and at most two integer points from $\chi$.
Our algorithm maintains a hitting set $\A$. Initially $\A=\emptyset$.
On receiving a new input interval $\sigma$, if it is not hit by any of the points from $\A$, our algorithm adds one integer point from $\chi$ contained in the interval $\sigma$ to the set $\A$.
 
Let $\I$ be the set of input intervals  presented to the algorithm.
Let $\OO$ be an offline optimal hitting set for $\I$.  
 Let $\A'=\A\setminus \{\A\cap \OO\}$ and  $\OO'=\OO\setminus \{\A\cap \OO\}$. 
 Let $p \in \OO'$ be an integer point and let $\I_p\subseteq \I$ be the set of input intervals that are hit by the point $p$.
 Let $\A_{p}\subseteq \A'$ be the set of points used by our algorithm to hit the intervals in $\I_p$.   If $p\in\chi$, then $\A_{p}$  contains either $\{p,p+2\}$ or $\{p-2,p\}$ from $\chi$, since $p\notin \A\cap \OO$, {we have} $|\A_p|\leq 1$; otherwise, $\A_{p}$  contains at most two integer points: $p-1$ and $p+1$ from $\chi$. 
Therefore, $|\A_p|\leq 2$. Since $\A'=\cup_{p\in \OO'}\A_p$, we have $|\A'|\leq\sum_{p\in \OO'}|\A_p|\leq 2\times |\OO'|$. Note that $\frac{|\A'| }{|\OO'|}\leq 2$ implies $\frac{|\A|}{|\OO|}\leq 2$.
Thus, the competitive ratio of our algorithm is at most~2.

To prove the lower bound of the competitive ratio, we construct a sequence of intervals  $\sigma_1,\sigma_2$ adaptively such that any online algorithm needs to place two integer points; while an offline optimum needs just one point.
Initially, we present a unit interval $\sigma_1=[x,x+2]$, where $x\in \mathbb{Z}$. Any online algorithm places an integer point $h_1=x+i$, where $i\in\{0,1,2\}$,  to hit the interval $\sigma_1$. For any choice of $i\in\{0,1,2\}$ for the hitting point $h_1$, it is always possible to present another interval $\sigma_2$ that does not contain the point $h_1=x+i$ but contains the  point $x'=x+((i+1)\mod 3)\in\{x,x+1,x+2\}$. Hence, the theorem follows. 
\end{proof}


\section{Hitting Set Problem for Unit Balls}\label{sec:d-ball}
{In this section, we present $\BPA$ algorithms for unit balls in $\IR^2$ and $\IR^3$. After that, we present the analysis of the algorithm $\NC$ for unit balls in $\IR^d$. Finally, we give a lower bound for hitting unit balls in $\IR^d$ ($d<4$).}

\subsection{Unit Balls in $\IR^2$ and $\IR^3$}
Let $\Lambda_d=\{\alpha_1 {\bf e}_1+\alpha_2 {\bf e}_2+\ldots+\alpha_d {\bf e}_d\ |\ \alpha_i\in \mathbb{Z},\ \forall\ i\in[d]\}$   be the integer lattice in $\IR^d$ generated by standard unit vectors ${\bf e}_1$, ${\bf e}_2,\ldots,{\bf e}_d$. Consider a subset $\chi_d\subset \Lambda_d$ defined as follows:\\ $\chi_d=\{\alpha_1 {\bf u}_1+\alpha_2 {\bf u}_2+\ldots+\alpha_d {\bf u}_d\ |\ \alpha_i\in \mathbb{Z},\ \forall\ i\in[d]\}$. Here, for $d\leq 4$ we have
\[
 {\bf u}_i=
\begin{cases}
 2{\bf e}_1, & \text{for } i=1\\
   {\bf e}_{i-1}+{\bf e}_i,& \text{for } i\in[d]\setminus\{1\}.
  
\end{cases}
\]

\begin{lemma}\label{lem:correct}
    For $d\leq 4$, each unit ball  $B_d(r,1)$ centered at any point $r \in \IR^d$ contains at least one point of $\chi_d$.
\end{lemma}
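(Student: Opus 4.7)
The plan is to first identify $\chi_d$ as the ``even coordinate sum'' sublattice of $\mathbb{Z}^d$, and then, for any center $r$, exhibit a nearby member of this sublattice via a rounding-and-single-repair procedure. My first step is to verify
\[
\chi_d = \{p\in\mathbb{Z}^d : p(x_1)+p(x_2)+\cdots+p(x_d) \text{ is even}\}.
\]
The forward inclusion is immediate because every generator $\mathbf{u}_i$ has coordinate sum $2$. For the reverse, given an integer point $p$ with even coordinate sum, I would back-substitute from the last coordinate upward: $\alpha_d=p(x_d)$, $\alpha_{d-1}=p(x_{d-1})-p(x_d)$, and so on, down to $\alpha_1=(p(x_1)-p(x_2)+p(x_3)-\cdots)/2$; the numerator has the same parity as $\sum_i p(x_i)$, so $\alpha_1\in\mathbb{Z}$ exactly when the coordinate sum is even.

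For the approximation step, given $r\in\IR^d$, let $q\in\mathbb{Z}^d$ be a coordinate-wise nearest integer point to $r$ and set $\delta_i=|r(x_i)-q(x_i)|\le 1/2$. If $\sum_i q(x_i)$ is even, then $q\in\chi_d$ and $dist(r,q)^2\le d/4\le 1$ for $d\le 4$, so $q\in B_d(r,1)$ and we are done. Otherwise, I would pick an index $j$ achieving $M:=\delta_j=\max_i\delta_i$ and replace $q(x_j)$ by the integer on the opposite side of $r(x_j)$, producing a point $q'$. The parity of the coordinate sum flips, so $q'\in\chi_d$.

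The central estimate is then
\[
dist(r,q')^2 = dist(r,q)^2 - M^2 + (1-M)^2 = dist(r,q)^2 + 1 - 2M.
\]
Bounding $dist(r,q)^2\le dM^2$ via $\delta_i\le M$ for all $i$, I obtain $dist(r,q')^2\le 1 + M(dM-2)$, which is at most $1$ precisely when $dM\le 2$. Since $d\le 4$ and $M\le 1/2$, the inequality $dM\le 2$ holds, giving $q'\in B_d(r,1)\cap\chi_d$.

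The main obstacle I anticipate is justifying the greedy choice of flipping the \emph{largest}-error coordinate: flipping a small-error coordinate would inflate the squared distance by roughly $1$, while flipping the largest-error one converts $M^2$ into $(1-M)^2$ and is exactly offset by the uniform bound $(d-1)M^2$ on the contribution of the remaining coordinates. The inequality $dM\le 2$ saturates at $d=4$, $M=1/2$, which is also the reason the paper must adopt a different definition of $\chi_d$ in dimensions $d\ge 5$.
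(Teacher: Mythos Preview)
Your proposal is correct and follows essentially the same route as the paper: round $r$ coordinatewise to a nearest integer point, then change the single coordinate with largest rounding error to produce a consecutive integer point, bounding the new squared distance by $1+M(dM-2)\le 1$ for $d\le 4$. Your explicit identification of $\chi_d$ with the even--coordinate-sum sublattice makes precise what the paper states as ``exactly one among every two consecutive integer points belongs to $\chi_d$,'' and your flip to the opposite side of $r(x_j)$ handles the sign correctly.
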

\begin{proof}
As per the definition of $\chi_d$, precisely one among every two consecutive integer points belongs to the set $\chi_d$. To prove the lemma, it is sufficient to prove that a unit ball $B_d(r,1)\subset \IR^d$ centered at any point $r\in \IR^d$ contains at least  two consecutive integer points. Note that any real number $x\in\IR$ can be expressed as $x=y+z$, where $z\in\mathbb{Z}$ and $y\in\left(\frac{-1}{2},\frac{1}{2}\right]$.
Let $r=(z(x_1)+y(x_1),z(x_2)+y(x_2),\ldots,z(x_d)+y(x_d))$ and $z=(z(x_1),z(x_2),\ldots,z(x_d))$. To show that a point $p\in\IR^d$ belongs to the unit ball $B_d(r,1)$, we need to show that the $dist(r,p)\leq 1$. Now, consider the square of the distance between $r$ and $z$ as follows

\begin{align*}
dist^2(r,z)=&\sum_{i=1}^d\left(z(x_i)-r(x_i)\right)^2\\
=&(z(x_1)-(z(x_1)+y(x_1)))^2+(z(x_2)-(z(x_2)+y(x_2)))^2+\ldots+(z(x_d)-(z(x_d)+y(x_d)))^2\\
=& y(x_1)^2+y(x_2)^2+\ldots+y(x_d)^2\\
 \leq&\ d\left(\frac{1}{2}\right)^2\\
 \leq&1.
\end{align*}  

The last inequality follows because $d\leq 4$.
Let $t\in[d]$ be an index such that $|y(x_t)|=\max\{|y(x_i)|:i\in[d]\text{ and } d\leq4 \}$.
       Let $z' \in \mathbb{Z}^d$ be an integer point such that
      \[
      z'(x_i)=
      \begin{cases}
     z(x_i)+1,& \text{ if $i=t$} \\
      z(x_i),& \text{ otherwise}. 
      \end{cases}
      \]
      Now, consider the square of the distance between $z'$ and $r$ as follows

\begin{align*}
dist^2(r,z')=&\sum_{i=1}^d\left(z'(x_i)-r(x_i)\right)^2\\
=&((z(x_t)+1)-(z(x_t)+y(x_t)))^2\ +\sum_{i\in[d]\setminus\{t\}}\left(z(x_i)-(z(x_i)+y(x_i))\right)^2\\
=&\ 1-2y(x_t)+y(x_1)^2+y(x_2)^2+\ldots+y(x_d)^2\\
\leq&\ 1-2y(x_t)+2y(x_t)=1.
\end{align*} 

 Here, the last inequality follows due to the following: since $|y(x_t)|=\max\{|y(x_i)|:i\in[d]\}$, $d\in[4]$  and $y(x_t)\leq\frac{1}{2}$, we have $y(x_1)^2+y(x_2)^2+\ldots+y(x_d)^2\leq\ d \left(y(x_t)^2\right)\leq \frac{d}{2}y(x_t) \leq 2y(x_t)$. 
Note that the distance of $r$ from both integer points $z$ and $z'$ is less than or equal to 1. Since $z$ and $z'$ are consecutive integer points, one of them must belong to $\chi_d$ and the ball $B_d(r,1)$ contains at least one integer point of $\chi_d$.
\end{proof}
\begin{theorem}\label{3d-ball}
    For hitting unit balls using points in $\mathbb{Z}^3$, there exists a deterministic online algorithm that achieves a competitive ratio of at most~14.
\end{theorem}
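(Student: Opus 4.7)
The plan is to instantiate the $\BPA$ algorithm with the filter set $\chi_3$ introduced just before the theorem. Correctness is immediate from Lemma~\ref{lem:correct} (applied with $d=3\leq 4$): every unit ball $\sigma\subset\IR^3$ contains at least one point of $\chi_3$, so the best-point of $\chi_3\cap\sigma$ always exists and can be added to $\A$. For the competitive analysis I would reuse the charging scheme from Theorem~\ref{thm:int}: let $\OO$ be an offline optimum, set $\A'=\A\setminus\OO$ and $\OO'=\OO\setminus\A$, and for each $p\in\OO'$ define $\I_p$ as the arriving balls hit by $p$ and $\A_p\subseteq\A'$ as the set of $\chi_3$-points that $\BPA$ places while processing balls in $\I_p$. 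Because $\A'=\bigcup_{p\in\OO'}\A_p$, it suffices to prove $|\A_p|\leq 14$ for every $p\in\OO'$, which immediately yields $|\A|\leq 14\,|\OO|$.

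Every $h\in\A_p$ lies in some $\sigma\in\I_p$ with center $c$, so by the triangle inequality $dist(h,p)\leq dist(h,c)+dist(c,p)\leq 2$; hence $\A_p\subseteq\chi_3\cap B(p,2)$. By inverting the generators ${\bf u}_1,{\bf u}_2,{\bf u}_3$ I can rewrite $\chi_3=\{(a,b,c)\in\mathbb{Z}^3 : a+b+c\text{ is even}\}$, which makes the enumeration tractable. I then split on the parity of the coordinate sum of the integer point $p$. If this sum is odd then $p\notin\chi_3$, and a direct count produces exactly $14$ points of $\chi_3$ in $B(p,2)$: the six unit-length vectors $p\pm{\bf e}_i$ and the eight diagonal vectors $p+(\pm 1,\pm 1,\pm 1)$. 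Hence $|\A_p|\leq 14$ already from the geometric bound.

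The case where $p$ has even coordinate sum is subtler, since a naive count then gives $|\chi_3\cap B(p,2)|=19$, exceeding $14$. Here I would exploit the strict total order $\prec$ baked into the definition of best-point: because $p\in\chi_3$ lies in every $\sigma\in\I_p$, the best-point $h$ chosen by $\BPA$ for $\sigma$ satisfies $h\preceq p$, and the assumption $p\in\OO'$ (i.e.\ $p\notin\A$) rules out $h=p$, forcing $h\prec p$ strictly. Translating $p$ to the origin and enumerating the elements of $\chi_3\cap B(\vec 0,2)$ that strictly precede $\vec 0$ in $\prec$ leaves only $9$ candidates, so $|\A_p|\leq 9\leq 14$. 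The main obstacle of the proof is precisely this even-parity case: the naive geometric estimate is too loose, and one must carefully extract the additional restriction hidden in the best-point rule to cut the candidate set from $19$ down to a number at most $14$.
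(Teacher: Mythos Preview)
Your proposal follows the same overall strategy as the paper's proof: instantiate $\BPA$ with the filter set $\chi_3$, invoke Lemma~\ref{lem:correct} for correctness, and charge each $p\in\OO'$ with at most $14$ points via the containment $\A_p\subseteq\chi_3\cap B_3(p,2)$. Your identification $\chi_3=\{(a,b,c)\in\mathbb{Z}^3:a+b+c\ \text{even}\}$ is correct and makes the two cases and the enumeration considerably cleaner than the paper's figure-based plane-by-plane analysis.

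One slip to fix: the best-point is the $\prec$-\emph{maximum} (the paper defines $p^{*}$ by $q\prec p^{*}$ for all $q\neq p^{*}$), so from $p\in\chi(\sigma)$ you get $p\preceq h$, not $h\preceq p$; ruling out $h=p$ then forces $p\prec h$. This does not damage your conclusion, because $\chi_3\cap B_3(p,2)$ is centrally symmetric about $p$ when $p\in\chi_3$, so exactly $9$ of the $18$ non-$p$ points lie on each side of $p$ in $\prec$. In fact your bound of $9$ in the even-parity case is sharper than the paper's $13$: the paper only discards the points in the two lowest $z$-planes (those with $p'\prec p$ for the crude reason that $p'(x_3)<p(x_3)$), whereas you correctly discard \emph{all} points on the wrong side of $p$.
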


\begin{figure}[htbp]
	\centering
 \hfill
\begin{subfigure}[b]{0.32\textwidth}
	\centering
	\includegraphics[width=23.5 mm]{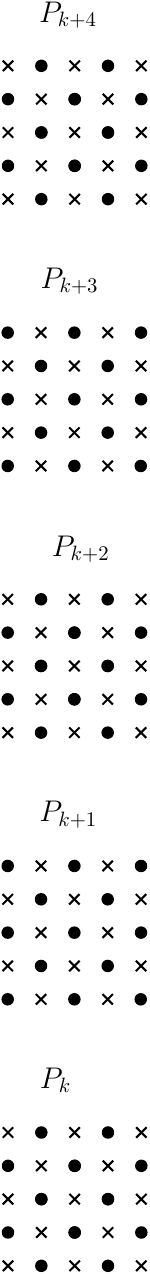}
	\caption{}
	\label{fig:bp}
\end{subfigure}
\hfill
\begin{subfigure}[b]{0.32\textwidth}
	\centering
	\includegraphics[width=30 mm]{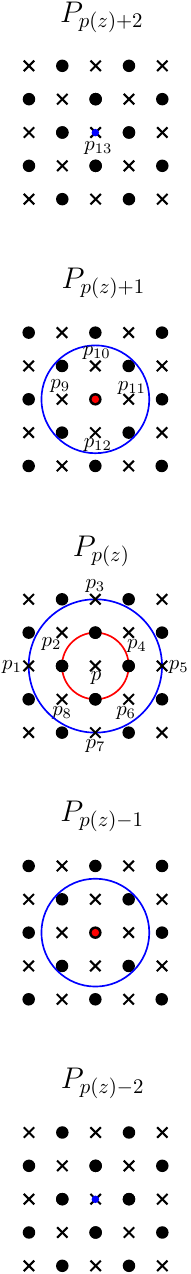}
	\caption{}
	\label{fig:ib_(-2)}
\end{subfigure}
\begin{subfigure}[b]{0.32\textwidth}
	\centering
	\includegraphics[width=23.7 mm]{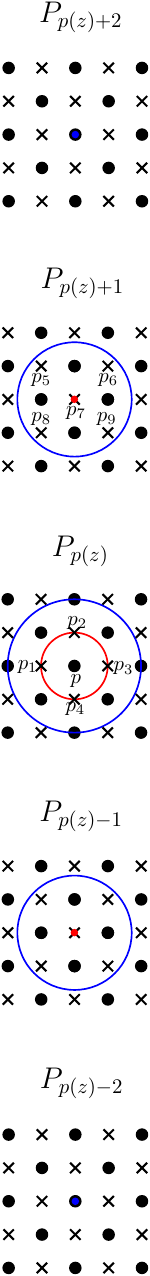}
	\caption{}
	\label{fig:nb_(-2)}
\end{subfigure}
 \caption{(a). {The projections of  planes  $P_{k}, P_{k+1},  \ldots, P_{k+4}$ over a rectangular region.} Illustration of Theorem~\ref{3d-ball}. Here, the boundary of balls $B_3(p,1)$ and $B_3(p,2)$ are represented with red and blue color, respectively, (b) Case~1.1 and (c) Case~1.2.}
   \label{fig:ball}
\end{figure}

\begin{proof}
For the sake of simplicity, throughout the proof, we use $\chi$ to represent $\chi_3$.
For any $k\in  \mathbb{Z}$, we use $P_k$  to denote the plane parallel to $xy$-plane with $z$-coordinate value $k$. The projections of  planes  $P_{k}, P_{k+1},  \ldots, P_{k+4}$  over a rectangular region are depicted in {Fig.~\ref{fig:bp}}. 
Observe that $P_{k}\cap\chi$ and $P_{k+1}\cap\chi$ are translated copies of each other by 1 unit of $y$-coordinate.
Algorithm $\BPA$ maintains a hitting set $\A$ consisting of points of $\chi$. On receiving a new input unit ball $\sigma$, if it is not hit by any of the points from $\A$ then the algorithm adds the best-point of $\chi$ lying inside $\sigma$ to the set $\A$.
{Correctness of the algorithm follows from Lemma~\ref{lem:correct}}.

Let $\I$ be the set of input balls presented to the algorithm.
Let $\OO$ be an offline optimal hitting set for $\I$.  
 Let {$\A'=\A\setminus \{\A\cap \OO\}$ and  $\OO'=\OO\setminus \{\A\cap \OO\}$}. 
Let $p \in \OO'$, and let $\I_p\subseteq \I$ be the set of input balls containing the point $p$.
 Let $\A_{p}\subseteq \A'$ be the set of hitting points placed by our algorithm to hit explicitly when some ball in $\I_p$ arrives.
  In the following lemma, we prove that the cardinality of
$\A_p$ is bounded by 14. 
Since $\A'=\cup_{p\in \OO'}\A_p$, we have $|\A'|\leq\sum_{p\in \OO'}|\A_p|\leq 14\times |\OO'|$. Note that $\frac{|\A'| }{|\OO'|}\leq 14$ implies $\frac{|\A|}{|\OO|}\leq 14$.
Thus, the competitive ratio of our algorithm is at most~14.
\end{proof}
 \begin{lemma}\label{ball_3d}
 $|\A_p|\leq 14$.
 \end{lemma}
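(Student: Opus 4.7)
The plan is to bound $|\A_p|$ by first localizing $\A_p$ to a small neighborhood of $p$ inside $\chi_3$, and then counting candidate points via a parity-based case analysis on the coordinate sum of $p$.

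For the localization, observe that whenever the algorithm adds $q$ to $\A_p$ it does so because some ball $\sigma \in \I_p$ has just arrived unhit and $q$ is the best-point of $\chi_3 \cap \sigma$. Since both $p$ and $q$ lie in the radius-$1$ ball $\sigma$, the triangle inequality yields $dist(p,q) \leq 2$, so $\A_p \subseteq \chi_3 \cap B_3(p,2)$. Unfolding the definition of $\chi_3$, one sees that it is precisely the even sublattice $\{x \in \mathbb{Z}^3 : x(x_1)+x(x_2)+x(x_3) \equiv 0 \pmod 2\}$, which reduces the problem to a concrete enumeration of integer displacements $d = q - p$ with $d(x_1)^2 + d(x_2)^2 + d(x_3)^2 \leq 4$ under suitable parity and ordering constraints.

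I would then split on the parity of $p(x_1) + p(x_2) + p(x_3)$. In the case $p \notin \chi_3$, the displacement $d$ has odd coordinate sum, so the only admissible $d$ are the $6$ axis vectors $\pm {\bf e}_i$ (squared norm $1$) and the $8$ sign vectors $(\pm 1,\pm 1,\pm 1)$ (squared norm $3$), for a total of $14$. In the case $p \in \chi_3$, the point $p$ is itself a candidate for the best-point of $\chi_3 \cap \sigma$ for every $\sigma \in \I_p$; since $p \in \OO'$ forces $p \notin \A$, the best-point definition gives $p \prec q$ for every $q \in \A_p$. Enumerating the admissible $d$ under the additional constraint ${\bf 0} \prec d$ leaves at most $9$ vectors, so $|\A_p| \leq 9 \leq 14$ in this case as well.

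The main obstacle is the second case: without invoking the $\prec$-ordering there are $18$ lattice points of $\chi_3 \cap B_3(p,2) \setminus \{p\}$, which already exceeds $14$, so the bound must genuinely use the best-point rule to discard the ``backward'' half of the neighborhood. Carefully organizing the enumeration---e.g.\ by stratifying vectors $d$ by $d(x_3)$ first and then $d(x_2)$, in accordance with the definition of $\prec$---is the delicate routine step. The first case is comparatively easy because the parity mismatch between $p$ and $\chi_3$ already pares the candidate set down to exactly $14$ vectors.
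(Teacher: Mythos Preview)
Your proposal is correct and follows essentially the same strategy as the paper: localize $\A_p$ to $\chi_3 \cap B_3(p,2)$, case-split on whether $p\in\chi_3$, and in the affirmative case use the best-point rule to discard points $q$ with $q\prec p$. Your algebraic identification of $\chi_3$ with the even sublattice $\{x\in\mathbb{Z}^3:\sum_i x(x_i)\equiv 0\pmod 2\}$ makes the enumeration cleaner than the paper's figure-based count, and by fully exploiting $\prec$ (rather than only discarding the two lower $z$-planes as the paper does) you obtain $|\A_p|\le 9$ when $p\in\chi_3$, sharper than the paper's $13$ in that case; the overall argument is the same.
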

 \begin{proof}
 Observe that the center of each $\sigma \in \I_p$ lies in the region $B_3(p,1)$, and to hit balls of $\I_p$, our algorithm places integer points from $\chi(B_3(p,2))$.
 Therefore, $\A_p$ contains points from $\chi(B_3(p,2))$.
Let $p(z)$ be the z-coordinate value of the point $p$.
 Note that the ball $B_3(p,2)$ contains integer points only from five planes, 
 namely, $P_{p(z)+2},P_{p(z)+1},P_{p(z)},P_{p(z)-1}$ and $P_{p(z)-2}$. 
 As per the definition of $\chi$, we know that $P_{p(z)+2}\cap\chi$,  $P_{p(z)}\cap\chi$ and $P_{p(z)-2}\cap\chi$ are same. Similarly,  $P_{p(z)+1}\cap\chi$ is same as $P_{p(z)-1}\cap\chi$. 
Observe that, if the center of the unit ball $B_3(p,1)$ coincides with some point of $\chi$, then the ball contains only one point of $\chi$ (see planes $P_{p(z)-1}, P_{p(z)}$ and $P_{p(z)+1}$ in Fig.~\ref{fig:ib_(-2)}); otherwise, it contains six points of $\chi$ (see planes $P_{p(z)-1}, P_{p(z)}$ and $P_{p(z)+1}$ in Fig.~\ref{fig:nb_(-2)}). As a result, we have the following two cases.

\noindent
\textbf{Case 1:} 
$|\chi(B_3(p,1))|=1$. In this case, $p\in\chi$.
Representative figures of five planes $P_{p(z)-2}$, $P_{p(z)-1}$, $P_{p(z)}$, $P_{p(z)+1}$ and $P_{p(z)+2}$ intersecting the ball $B_3(p,2)$ are  shown in Fig.~\ref{fig:ib_(-2)}.
Observe that  $B_3(p,2)$ contains 19 integer points of $\chi$ including $p$. As per the definition of $\A_p$, we know $p\notin \A_p$.
 Now, we show that none of the points from plane $P_{p(z)-1}$ and $P_{p(z)-2}$ are in $\A_p$. Here,  we want to remind the reader  that any unit ball $\sigma\in\I_p$  contains  the point $p$. For any point $p'\in\chi\cap P_{p(z)-2}$ and $p''\in\chi\cap P_{p(z)-1}$, we have $p',p''\prec p$, thus our algorithm does not add any point from $\chi\cap P_{p(z)-1}$ and $\chi\cap P_{p(z)-2}$ to $\A_p$. Hence, $|\A_p|\leq 13$.
\\
\textbf{Case 2:}  $|\chi(B_3(p,1))|=6$. Notice that, in this case, $p\notin\chi$. Representative figures of five planes $P_{p(z)-2}$, $P_{p(z)-1}$, $P_{p(z)}$, $P_{p(z)+1}$ and $P_{p(z)+2}$ intersecting the ball $B_3(p,2)$ are  shown in Fig.~\ref{fig:nb_(-2)}.
Observe that the ball $B_3(p,2)$ contains only fourteen integer points of $\chi$. Hence, $|\A_p|\leq 14$.
\end{proof}

\begin{theorem}\label{2d-balls}
    For hitting unit disks using points in $\mathbb{Z}^2$, there exists a deterministic online algorithm that  achieves a competitive ratio of at most~4.
\end{theorem}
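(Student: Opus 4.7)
The plan is to mirror the structure of the proof of Theorem~\ref{3d-ball}, running the $\BPA$ algorithm with filter-set $\chi_2 = \{\alpha_1 (2{\bf e}_1) + \alpha_2 ({\bf e}_1 + {\bf e}_2) : \alpha_1,\alpha_2 \in \mathbb{Z}\}$, which is the sublattice of $\mathbb{Z}^2$ consisting of those integer points $q$ with $q(x_1) + q(x_2)$ even. By Lemma~\ref{lem:correct}, every unit disk in $\IR^2$ contains at least one point of $\chi_2$, so the algorithm that adds the best-point of $\chi_2$ inside each uncovered incoming disk is well-defined.

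Fixing an offline optimum $\OO$ and setting $\A' = \A \setminus (\A\cap\OO)$, $\OO' = \OO \setminus (\A\cap\OO)$, the argument reduces, exactly as in Theorem~\ref{3d-ball}, to establishing $|\A_p| \leq 4$ for every $p \in \OO'$, where $\A_p \subseteq \A'$ is the set of points placed explicitly to hit disks of $\I_p := \{\sigma \in \I : p \in \sigma\}$. Since the centers of disks in $\I_p$ lie in $B_2(p,1)$, the algorithm selects only from $\chi_2 \cap B_2(p,2)$. Because $p$ is itself an integer point (hitting-set points come from $\mathbb{Z}^2$), the case split is whether $p \in \chi_2$ (i.e.\ $p(x_1)+p(x_2)$ even) or $p \notin \chi_2$.

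For the parity case $p \notin \chi_2$, a row-by-row enumeration of $B_2(p,2)$ across the rows $y = p(x_2) + k$ for $k \in \{-2,-1,0,1,2\}$ shows that the only points of $\chi_2$ at distance at most $2$ from $p$ lie in the three middle rows and total exactly $1+2+1 = 4$ points, so $|\A_p| \leq 4$ trivially. For the case $p \in \chi_2$, the same enumeration gives $|\chi_2 \cap B_2(p,2)| = 1 + 2 + 3 + 2 + 1 = 9$, and here the best-point rule does the work: every $\sigma \in \I_p$ contains $p$, so the algorithm never selects any $q \in \chi_2$ with $q \prec p$. Under the relation $\prec$ for $d=2$ (compare $x_2$ first, then $x_1$), the points $\prec p$ are precisely those in the two lower rows together with the leftward point $(p(x_1)-2, p(x_2))$ in $p$'s own row, giving $3 + 1 = 4$ exclusions; together with $p$ itself (which, if ever added, lies in $\A \cap \OO$ and hence outside $\A'$), this removes $5$ of the $9$ candidates and leaves at most $4$.

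The main obstacle is simply to verify the row-by-row enumeration cleanly and to check the $\prec$-comparison in Case~1 carefully; no new geometric idea beyond Lemma~\ref{lem:correct} and the best-point ordering is required. Once $|\A_p| \leq 4$ is in hand, $\A' = \bigcup_{p\in\OO'} \A_p$ yields $|\A'| \leq 4|\OO'|$, whence $|\A|/|\OO| \leq 4$, proving the claimed competitive ratio.
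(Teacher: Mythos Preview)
Your proof is correct and follows essentially the same approach as the paper: the same filter-set $\chi_2$, the same $\BPA$ algorithm, the same reduction to bounding $|\A_p|$, and the same case split on whether $p\in\chi_2$, with the identical counts of $4$ and $9$ for $|\chi_2\cap B_2(p,2)|$ in the two cases. The only difference is that in the case $p\in\chi_2$ the paper squeezes out $|\A_p|\leq 3$ by excluding one additional point, whereas your purely $\prec$-based exclusion yields $|\A_p|\leq 4$, which already suffices for the stated competitive ratio.
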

\begin{proof}

The proof is similar to Theorem~\ref{3d-ball}. Here,
the integer lattice $\Lambda=\{q {\bf e}_1+r {\bf e}_2 | q,r\in \mathbb{Z}\}$ in $\IR^2$ is  generated by standard unit vectors ${\bf e}_1$ and ${\bf e}_2$ and the $\chi\subset \Lambda$ is defined as $\chi=\{q {\bf u}+r {\bf v} |  q,r\in \mathbb{Z}\}$, where ${\bf u}=2{\bf e}_1$ and ${\bf v}={\bf e}_1+{\bf e}_2$. Here, we prove that $|\A_p|\leq 4$. Thus, the competitive ratio of algorithm $\BPA$ is at most 4.
 Observe that, if the center of the ball $B_3(p,1)$ coincides with some point in $\chi$, then $B_3(p,1)$ contains only 1 point of $\chi$ (see plane $P_{p(z)}$ in Fig.~\ref{fig:ib_(-2)}); otherwise, it contains 4 points of $\chi$ (see plane $P_{p(z)}$ in Fig.~\ref{fig:nb_(-2)}). Similar to Lemma~\ref{ball_3d}, we have two cases.\\
\textbf{Case 1:} $p\in\chi$. Note that $B_2(p,2)$ contains 9 integer points of $\chi$ (see plane $P_{p(z)}$ in Fig.\ref{fig:ib_(-2)}). As per the definition of $\A_p$, we know $p\notin \A_p$. Let us consider a unit disk $\sigma\in\I_p$ that contains the point $p_1$. Since $p_1\prec p$, our algorithm does not add $p_1$ to $A_p$ upon the arrival of $\sigma$. In a similar way, one can observe that none of the points $\{p_2,p_6,p_7,p_8\}$ are in $\A_p$. As a result $|\A_p|\leq 3$.\\
\textbf{Case 2:} $p\notin \chi$. Observe that $B_2(p,2)$ contains only four integer points of $\chi$ (see plane $P_{p(z)}$ in Fig.\ref{fig:nb_(-2)}). Hence, $|\A_p|\leq4.$ \end{proof}

\subsection{Unit Balls in $\mathbb{R}^d$}\label{sec:balls}
In this subsection, we present the upper bound on the competitive ratio for hitting unit balls in $\IR^d$.

\begin{theorem}\label{ball_ub}
For hitting unit balls using points in $\mathbb{Z}^d$, {the algorithm $\NC$}  achieves a  competitive ratio of at most~$O(d^4)$, when $d\in\mathbb{N}$.
\end{theorem}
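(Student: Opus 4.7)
The plan is to follow the charging scheme used in the proofs of Theorem~\ref{2d-balls} and Theorem~\ref{3d-ball}, but to replace the explicit case analysis by a dimension-sensitive count of lattice points inside $B_d(p,2)$. Let $\I$ be the input sequence of unit balls, $\A$ the hitting set produced by $\NC$, and $\OO$ an offline optimum. Set $\A'=\A\setminus(\A\cap\OO)$ and $\OO'=\OO\setminus(\A\cap\OO)$, and for each $p\in\OO'$ let $\I_p\subseteq\I$ consist of the balls hit by $p$ and $\A_p\subseteq\A'$ the integer points placed by $\NC$ while processing $\I_p$. As in the earlier theorems, it suffices to bound $|\A_p|=O(d^4)$: this gives $|\A'|\le\sum_{p\in\OO'}|\A_p|\le O(d^4)\,|\OO'|$, and then $|\A|/|\OO|\le O(d^4)$.

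The key structural observation is $\A_p\subseteq\Q(B_d(p,2))$. If $q\in\A_p$ is placed upon the arrival of some $\sigma\in\I_p$ with center $c$, then $p$ hits $\sigma$ gives $\|c-p\|\le 1$, and since $\NC$ chooses $q$ to be a nearest integer point to $c$ while $p\in\mathbb{Z}^d$ is itself a candidate, we get $\|c-q\|\le\|c-p\|\le 1$. The triangle inequality then yields $\|q-p\|\le 2$. (For $\NC$ to be well defined on a feasible instance, every input ball must contain at least one integer point, so the nearest integer point to $c$ is automatically within $\sigma$.)

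It remains to count $|\Q(B_d(p,2))|$. Since $p\in\mathbb{Z}^d$, translating by $-p$ reduces this to the number of $a\in\mathbb{Z}^d$ with $\sum_i a_i^2\le 4$. Each coordinate $a_i$ lies in $\{-2,-1,0,1,2\}$. If some $|a_i|=2$, then all other coordinates must vanish, contributing $2d$ points; otherwise $|a_i|\le 1$ for every $i$, and the constraint $\sum_i a_i^2\le 4$ just says at most four coordinates are nonzero, contributing $\sum_{k=0}^{4}\binom{d}{k}2^{k}$ points. Summing gives
\[
|\Q(B_d(p,2))|\;=\;1+4d+4\binom{d}{2}+8\binom{d}{3}+16\binom{d}{4}\;=\;O(d^4),
\]
with the quartic growth coming from the dominant term $16\binom{d}{4}$.

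The main conceptual step is the localization $\A_p\subseteq\Q(B_d(p,2))$: once this is established, the argument becomes a purely combinatorial lattice-point count and the exponential blow-up one might fear from a $d$-dimensional radius-$2$ ball is tamed by the restrictive constraint $\sum a_i^2\le 4$, which forces almost all coordinates of any point in the translated lattice shell to be zero. Everything else is the same two-step charging bookkeeping used in the lower-dimensional cases.
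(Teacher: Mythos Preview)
Your proof is correct and follows essentially the same approach as the paper: both localize $\A_p\subseteq\Q(B_d(p,2))$ via the triangle inequality (you make the clean observation that $p$ itself witnesses $\|c-q\|\le\|c-p\|\le 1$, which is slightly crisper than the paper's phrasing) and then count the lattice points satisfying $\sum a_i^2\le 4$ to obtain $1+4d+\sum_{k=2}^{4}2^k\binom{d}{k}=O(d^4)$.

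One small remark: the paper's proof of this particular theorem works directly with $\A$ and $\OO$ rather than with the refined sets $\A',\OO'$ that you imported from Theorems~\ref{2d-balls} and~\ref{3d-ball}. The refinement is unnecessary for an $O(d^4)$ bound, and the containment $\A'\subseteq\bigcup_{p\in\OO'}\A_p$ you invoke is not automatic---a ball $\sigma$ could in principle be hit only by points of $\OO\cap\A$, in which case the point placed for $\sigma$ lies in $\A'$ but in no $\A_p$ with $p\in\OO'$. Dropping the primes, as the paper does here, is both simpler and safer.
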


\begin{proof}
Let $\A$ and $\OO$ be the hitting set returned by our online algorithm and an offline optimal, respectively.  Let $p \in \OO$ be any point.
Note that a unit ball $B_d(p,1)$ centered at $p$ contains all the centers of unit balls that can be hit by the point $p$. For simplicity, throughout the proof, let us assume that the point $p$ coincides with the origin. Let $\A_p$ be the set of hitting points placed by our online algorithm to pierce the ball having a center in $B_d(p,1)$.
It is easy to see that $\A=\cup_{p\in \OO}\A_p$.
Therefore, the competitive ratio of our algorithm is  upper bounded by  $\max_{p\in\OO}|\A_p|$. For any point, $r\in B_d(p,1)$, the maximum distance from $r$ to its nearest integer point can be at most one (the maximum distance from the center $c$ of the unit ball to any point $r\in B_d(c,1)$ is at most one). Therefore, a ball $B_d(p,2)$ centered  at $p$ having radius $2$ will contain all nearest integer points for all the centers $r$ lying in the ball $B_d(p,1)$. To complete the proof, we only need to calculate the cardinality of the set $\{z\in \mathbb{Z}^d: \sum_{i=1}^d |z|^2\leq 4\}$. In other words, we need to count the number of $z=(z(x_1),z(x_2),\ldots,z(x_d))\in\mathbb{Z}^d$ that satisfies:
\begin{equation}\label{eqn_z}
z(x_1)^2+z(x_2)^2+\ldots+z(x_d)^2\leq 4.
\end{equation}
\noindent
Note that to satisfy Equation~(\ref{eqn_z}), the coordinates of $z$ cannot be other than $\{-2,-1,0,1,2\}$
\begin{itemize}
    \item When all $d$ coordinates are $0$. There is only one possibility for this.
    \item When exactly one coordinate is nonzero. There will be ${d \choose 1}$ many choices for the position of the nonzero coordinate. Now, observe that for each nonzero coordinate, we have four choices $\{-2,-1,1,2\}$. 
   So, for this case, there will be a total of $4d$ integer points satisfying Equation~(\ref{eqn_z}).
   \item  Note that any integer point having more than four nonzero coordinates will not satisfy Equation~(\ref{eqn_z}). Now consider exactly $i$ nonzero coordinates for $i=2,3,4$. 
    There will be ${d \choose i}$ many choices for the position of the nonzero coordinates. Now observe that if any of the nonzero coordinates is $\{-2,2\}$, then the integer point will not satisfy Equation~(\ref{eqn_z}). Therefore, for each nonzero coordinate, we have just two choices $\{-1,1\}$. Thus, there will be a total of $2^i{d \choose i}$ integer points satisfying Equation~(\ref{eqn_z}).

 \end{itemize}
Now, from the above cases, there will be at most $1+4d+\sum_{i=2}^4 2^i{d \choose i}=O(d^4)$ integer points satisfying Equation~(\ref{eqn_z}). Hence, we have $|\A|\leq O(d^4)|\OO|$.
\end{proof}

\subsection{Lower Bound for $d<4$}
To obtain a lower bound of the competitive ratio, we think of a game between two players: Alice and Bob. Here, Alice plays the role of the adversary, and Bob plays the role of the online algorithm. In each round of the game, Alice presents a unit ball such that Bob needs to place a new hitting point.
We show that Alice can present an input sequence of balls  $\sigma_1,\sigma_2,\ldots,\sigma_{d+1}\subset \mathbb{R}^d$, centered at $c_1,c_2,\ldots,c_{d+1}$, respectively, depending on the position of hitting points placed by Bob, for which Bob needs to place $d+1$ integer points; while the offline optimum needs just one point {(for illustration in two-dimensions, see Figure~\ref{fig:equi_class})}.
For the sake of simplicity, let us assume that the center $c_1$ of the first ball $\sigma_1$ coincides with the origin. Note that the ball $\sigma_1$ contains exactly $2d$ integer points $\P=\{p_1,p_2,\ldots,p_{2d}\}$ apart from the origin. The coordinates of these points are given below:

\begin{equation}\label{eqn_p}
p_k(x_j) =
    \begin{cases}
     \ \ 1,\quad \text{ if $k=j$,} & \text{for $k,j\in[d]$}\\
      -1,\quad \text{ if $k=d+j$,} & \text{for $k\in[2d]\setminus[d]$ \& $j\in[d]$}\\ 
     \ \ 0,\quad \text{ otherwise}.&
    \end{cases}      
\end{equation}

Let $\P_1=\{p_{1},p_{2},\ldots,p_{d}\}$ and  $\P_2=\{p_{d+1},p_{d+2},\ldots,p_{2d}\}$. To hit the input ball $\sigma_1$, Bob needs to choose a point $h_1\in \P_1\cup\P_2\cup \{c_1\}$. 
Depending on the position of $h_1$, Alice presents a ball $\sigma_2$ centered at a point $c_2$ that satisfies the following:

\begin{equation}\label{eq:c_2}
c_2 =
    \begin{cases}
    \left(\frac{1}{2}+\epsilon_d,\frac{1}{2}+\epsilon_d,\ldots,\frac{1}{2}+\epsilon_d\right),& \text{if $h_1\in \P_2\cup\{c_1\}$}\\
      \left(-(\frac{1}{2}+\epsilon_d),-(\frac{1}{2}+\epsilon_d),\ldots,-(\frac{1}{2}+\epsilon_d)\right), & \text{otherwise (i.e., if $h_1\in \P_1$)},
    \end{cases}      
\end{equation}

where 
 the value of $\epsilon_d$ is $0.5$ and $0.15$ for $d=2\text{ and } 3$, respectively.
Note that $\sigma_2$ does not contain the point $c_1$.

 \begin{figure}[htbp]
    \centering
    \includegraphics[width=50 mm]{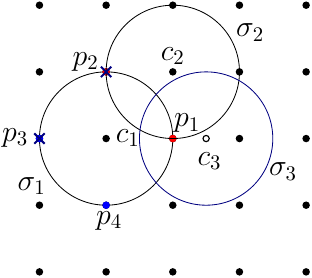}
    \caption{{Illustration of the lower bound for unit balls in $\IR^2$. Here, $\P_1=\{p_{1},p_{2}\}$ and  $\P_2=\{p_{3},p_4\}$.  Let $\sigma_1$ be the first ball presented by  Alice. To hit the input ball $\sigma_1$, Bob chooses a point $h_1=p_3\in \P_2$. 
 Alice presents the next ball $\sigma_2$ centered at $c_2=
    \left(1,1\right)$ such that $\sigma_2$ contains all the points of $\P_1$ but does not contain any point from $\P_2\cup c_1$.  To hit $\sigma_2$, Bob chooses a point $h_2=p_2$. Alice presents $\sigma_3$ centered at a point $c_3=
    \left(1,0\right)$. The ball $\sigma_3$ does not contain $h_1$ and $h_2$. To hit $\sigma_3$, Bob place $h_3=p_1$. To hit $\sigma_1,\sigma_2$ and $\sigma_3$, any offline optimum will place $p_1$ as the hitting point.}}
    \label{fig:equi_class}
\end{figure}

\begin{lemma}\label{claim_1}
\begin{itemize}
 \item[(i)]  If $h_1\in \P_2\cup\{c_1\}$, then ${\cal Q}(\sigma_2)$ contains all the points of $\P_1$ and it does not contain any point of~$\P_2\cup \{c_1\}$.
\item[(ii)]  If $h_1\in \P_1$, then ${\cal Q}(\sigma_2)$ contains all the points of $\P_2$ and it does not contain any point of~$\P_1$
\end{itemize}
\end{lemma}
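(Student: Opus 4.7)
The plan is to reduce both parts to direct distance calculations, exploiting the symmetry of the construction. First, observe that part (ii) follows from part (i) by the central symmetry $x \mapsto -x$: negation maps $\mathcal{P}_1$ bijectively onto $\mathcal{P}_2$, fixes $c_1 = \mathbf{0}$, and sends the center $c_2$ from the ``positive'' choice in (i) to the ``negative'' choice in (ii). Therefore it suffices to verify (i).

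For (i), fix $c_2 = (1/2+\epsilon_d)(1,1,\ldots,1)$ and compute $\|c_2-q\|^2$ for each integer point $q \in \mathcal{P}_1 \cup \mathcal{P}_2 \cup \{c_1\}$, checking against the squared radius $1$ of $\sigma_2$. There are three shapes of distance to compute:
\begin{itemize}
\item For $p_k \in \mathcal{P}_1$ (so $p_k = \mathbf{e}_k$), exactly one coordinate of $c_2 - p_k$ equals $-(1/2 - \epsilon_d)$ and the remaining $d-1$ equal $1/2+\epsilon_d$, giving $\|c_2-p_k\|^2 = (1/2-\epsilon_d)^2 + (d-1)(1/2+\epsilon_d)^2$.
\item For $p_k \in \mathcal{P}_2$ (so $p_k = -\mathbf{e}_k$), one coordinate becomes $3/2+\epsilon_d$ and the rest remain $1/2+\epsilon_d$, giving $(3/2+\epsilon_d)^2 + (d-1)(1/2+\epsilon_d)^2$.
\item For $c_1$, $\|c_2-c_1\|^2 = d(1/2+\epsilon_d)^2$.
\end{itemize}
Substituting $\epsilon_d = 0.5$ when $d=2$ and $\epsilon_d = 0.15$ when $d=3$, I would verify that the first expression is at most $1$ (so $\mathcal{P}_1 \subseteq \mathcal{Q}(\sigma_2)$) while the second and third are strictly greater than $1$ (so $\mathcal{P}_2 \cup \{c_1\}$ is disjoint from $\mathcal{Q}(\sigma_2)$). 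For $d=2$ the first value is exactly $1$ and the others are $5$ and $2$; for $d=3$ the first is $0.9675$ and the others are $3.5675$ and $1.2675$, all as required.

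The main subtlety, rather than an obstacle, is that the value of $\epsilon_d$ is chosen tightly: it must be large enough so that $c_2$ is far from the origin and from $\mathcal{P}_2$, but small enough that all of $\mathcal{P}_1$ still sits inside $\sigma_2$. This is precisely the role played by the two numerical choices $\epsilon_2 = 0.5$ and $\epsilon_3 = 0.15$, and the proof is essentially a verification that these values satisfy both bounds simultaneously. Once (i) is established by these computations, (ii) follows immediately by the symmetry argument noted at the outset.
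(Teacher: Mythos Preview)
Your proposal is correct and follows essentially the same approach as the paper: both reduce to the same three distance computations $(1/2-\epsilon_d)^2+(d-1)(1/2+\epsilon_d)^2$, $(3/2+\epsilon_d)^2+(d-1)(1/2+\epsilon_d)^2$, and $d(1/2+\epsilon_d)^2$, then verify the inequalities for the specific values $\epsilon_2=0.5$ and $\epsilon_3=0.15$. The only cosmetic difference is that you make the symmetry $x\mapsto -x$ explicit to deduce (ii) from (i), whereas the paper simply remarks that (ii) is ``similar in nature.''
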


\begin{proof}
We prove part(i) of the lemma statement. The proof of part(ii) would be similar in nature.    Assume that  $h_1\in\P_2\cup\{c_1\}$. According to~(\ref{eq:c_2}), Alice presents $\sigma_2$ centered at $c_2=\left(\frac{1}{2}+\epsilon_d,\frac{1}{2}+\epsilon_d,\ldots,\frac{1}{2}+\epsilon_d\right)$.
Note that  the ball $\sigma_2$ does not contain the point $c_1$.
To see that $\sigma_2$ does not contain any point from $\P_2$, observe that 
for each $p_k\in\P_2$, we have  
\begin{align*}
    dist(c_2,p_k)^2=&\left(\frac{3}{2}+\epsilon_d\right)^2+\sum_{j\in[d ]\setminus\{k-d\}} \left(\frac{1}{2}+\epsilon_d\right)^2>1.
\end{align*}
Finally, we prove that the ball $\sigma_2$ contains all the points of $\P_1$.
For each $p_k\in\P_1$, we have\[
dist(c_2,p_k)^2=\left(\frac{1}{2}-\epsilon_d\right)^2+\sum_{j\in[d]\setminus \{k\}} \left(\frac{1}{2}+\epsilon_d\right)^2=\left(-\frac{1}{2}+\epsilon_d\right)^2+(d-1)\left(\frac{1}{2}+\epsilon_d\right)^2\leq1,
\]
The last inequality follows by placing the specific values of $\epsilon_d$, i.e., 0.5 and 0.15 for $d=2$ and $3$, respectively.
\noindent
Hence, the lemma follows. \end{proof}

From now onwards, we assume that Bob chooses $h_1\in\P_2\cup\{c_1\}$. The other case is similar in nature. Now, we show by induction that Alice and Bob can play the game for  the next $d+1$ rounds maintaining the following two invariants:  For $i=2,\ldots, {d+1}$ when Alice presents  balls  $\sigma_2,\ldots, \sigma_i$ and Bob presents piercing points $p_{\pi(2)}, p_{\pi(3)},\ldots, p_{\pi(i-1)}\in\P_1$. 
\begin{itemize}
    \item[(I)] The ball $\sigma_i\subset \mathbb{R}^d$ does not contain any previously placed hitting point $h_j\in \mathbb{Z}^d$, for 
$j\in[i-1]$.
   \item[(II)] The ball $\sigma_{i}$ contains all the points from $\P_1\setminus\{p_{\pi(2)}, p_{\pi(3)}\ldots, p_{\pi(i-1)}\}$. 
  \end{itemize}
Invariant (I) ensures that Bob needs a new point to hit $\sigma_i$. On the other hand, Invariant (II) ensures that $\cap\sigma_i$ contains a point from $\P_1$ that is not used by Bob.
 For $i=2$, due to Lemma~\ref{claim_1}, both the invariants are maintained. 
At the beginning of the round $i$ (for $i=2,\ldots,d$), assume that both invariants hold.
Let $\Pi=\{\pi(2),\pi(3),\ldots,\pi(i)\}$ be the set of indices of integer points chosen from $\P_1$ to hit the previously arrived balls.
Depending on the position of the hitting point $p_{\pi(i)}$, Alice presents a ball $\sigma_{i+1}$, in the $(i+1)$th round of the game, centering at $c_{i+1}$ that satisfies the following.

\begin{equation}\label{eqn}
c_{i+1}(x_j)=
\begin{cases}
     \left(\frac{3}{2}\right)^{(i-1)}c_2(x_j),& \text{for all $j\in[d]\setminus \Pi$, and}\\
      0, & \text{for $j\in \Pi$}.
    \end{cases}      
\end{equation}

\noindent
${\bullet}$ First, we prove that $\sigma_{i+1}$ does not contain the first hitting point $h_1$. Observe that $dist(c_{i+1},h_1)^2=\sum_{j\in [d]} \left(c_{i+1}(x_j)-h_1(x_j)\right)^2$.
 Note that for $j\in \Pi$, the value of $c_{i+1}(x_j)$ is zero. So we have 
 \begin{align*}
 dist(c_{i+1},h_1)^2=&\sum_{j\in \Pi} \left(0-h_1(x_j)\right)^2+\sum_{j\in[d]\setminus \Pi} \left(\left(\frac{3}{2}\right)^{(i-1)}\left(\frac{1}{2}+\epsilon_d\right)-h_1(x_j)\right)^2.
 \end{align*}
 
 If $h_1=c_1$, then we have $dist(c_{i+1},h_1)^2= \quad0 + (d-i+1)\left(\frac{3}{2}\right)^{2(i-1)}\left(\frac{1}{2}+\epsilon_d\right)^2>1$.
 If $h_1=p_k\in\P_2$, then we have the following two sub-cases. 
If $(k-d)\in \Pi$, we have $dist(c_{i+1},h_1)^2= 1 + (d-i+1)\left(\frac{3}{2}\right)^{2(i-1)}\left(\frac{1}{2}+\epsilon_d\right)^2>1$,
 otherwise (i.e., $(k-d)\in[d]\setminus \Pi$), we have $dist(c_{i+1},h_1)^2= (d-i+1)\left(\left(\frac{3}{2}\right)^{(i-1)}\left(\frac{1}{2}+\epsilon_d\right)+1\right)^2>1$.
Now, we show that  $\sigma_{i+1}$ does not contain any of the previously placed hitting points of $\P_1$. Here, for any $p_{\pi(k)}\in\{p_{\pi(2)}, p_{\pi(3)}\ldots, p_{\pi(i)}\}$, we have
 
\begin{align*}
dist(c_{i+1},p_{\pi(k)})^2=&\sum_{j\in [d]} \left(c_{i+1}(x_j)-p_{\pi(k)}(x_j)\right)^2.
\end{align*}

\noindent
Note that for $j\in \Pi$, $c_{i+1}(x_j)=0$, and $p_{\pi(k)} \in\P_1$ has only one nonzero coordinate that is the $\pi(k)$th coordinate with value 1 and $\pi(k)\in\Pi$. Therefore, we have

\begin{align*}
dist(c_{i+1},p_{\pi(k)})^2=&\sum_{j\in \Pi} \left(0-p_{\pi(k)}(x_j)\right)^2+\sum_{j\in[d]\setminus \Pi} \left(\left(\frac{3}{2}\right)^{(i-1)}\left(\frac{1}{2}+\epsilon_d\right)\right)^2\\
=& 1+(d-i+1)\left(\frac{3}{2}\right)^{2(i-1)}\left(\frac{1}{2}+\epsilon_d\right)^2>1. 
\end{align*}

Therefore, the distance between the center $c_{i+1}$ and previously placed hitting points $\{p_{\pi(2)}, p_{\pi(3)}\ldots, p_{\pi(i)}\}$ is greater than one.
Hence, invariant (I) holds.\\

\noindent
${\bullet}$ Now, we show that $\sigma_{i+1}$ contains all $(d-i+1)$ integer points from $\P_1\setminus\{p_{\pi(2)}, p_{\pi(3)}\ldots, p_{\pi(i)}\}$.  Here, for any $p_k\in\P_1\setminus\{p_{\pi(2)}, p_{\pi(3)}\ldots, p_{\pi(i)}\}$, we have
 
    \begin{align*}
        dist(c_{i+1},p_k)^2=&\sum_{j\in [d]} \left(c_{i+1}(x_j)-p_k(x_j)\right)^2\\
        =&\sum_{j\in \Pi} \left(c_{i+1}(x_j)-p_k(x_j)\right)^2+\sum_{j\in[d]\setminus \Pi} \left(\left(\frac{3}{2}\right)^{(i-1)}\left(\frac{1}{2}+\epsilon_d\right)-p_k(x_j)\right)^2.
        \end{align*}
        
Note that for $j \in  \Pi$, both $c_{i+1}(x_j)$ and $p_k(x_j)$ are zero. Here, $p_k$ has only one nonzero coordinate, which is the $k$th coordinate with value one and $k\notin\Pi$. Therefore, we have
       
        \begin{align}\label{eqn:why}
        \begin{split}
      dist(c_{i+1},p_k)^2=&\ 0
       +\left(\left(\frac{3}{2}\right)^{(i-1)}\left(\frac{1}{2}+\epsilon_d\right)-1\right)^2+\sum_{j\in [d]\setminus \{\Pi\cup\{k\}\}}\left(\left(\frac{3}{2}\right)^{(i-1)}\left(\frac{1}{2}+\epsilon_d\right)\right)^2\\
      =& \left(\left(\frac{3}{2}\right)^{(i-1)}\left(\frac{1}{2}+\epsilon_d\right)-1\right)^2+(d-i) \left(\left(\frac{3}{2}\right)^{(i-1)}\left(\frac{1}{2}+\epsilon_d\right)\right)^2\leq1.
      \end{split}
    \end{align} 
    
    The last inequality follows by placing specific values of $\epsilon_d$, i.e., $0.5$ and $0.15$ for $d=2$ and $3$, respectively.
   Hence, invariant (II) is maintained.

As a result, any online algorithm needs $d+1$ hitting points $\{p_{\pi(2)},p_{\pi(3)}\ldots p_{\pi(d+1)}\}$ and $h_1$; whereas an offline optimum needs just one point $p_{\pi(d+1)}$.
Thus, we have the following theorem.
\begin{theorem}\label{ball_lb}
The competitive ratio of every deterministic online algorithm is at least~$d+1$ for hitting unit balls in $\mathbb{R}^d$ using points in $\mathbb{Z}^d$, where  $d<4$.
\end{theorem}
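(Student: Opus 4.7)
The plan is to exhibit an adaptive adversarial strategy, framed as a two-player game between Alice (the adversary) and Bob (the online algorithm), that forces any deterministic online algorithm to place $d+1$ distinct hitting points on a sequence of unit balls $\sigma_1, \ldots, \sigma_{d+1}$, while all these balls share a common integer point and can therefore be pierced offline by one point. The first ball $\sigma_1$ is centered at the origin $c_1$, and its only integer points are $c_1$ together with the $2d$ unit-distance points $\P = \P_1 \cup \P_2$ given by~(\ref{eqn_p}). Bob's first hit $h_1$ thus lies in $\P_1 \cup \P_2 \cup \{c_1\}$; by the symmetry between $\P_1$ and $\P_2$ I would assume without loss of generality that $h_1 \in \P_2 \cup \{c_1\}$ and then commit to forcing Bob to pick a fresh point of $\P_1$ in every subsequent round.

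For the second round, Alice chooses $c_2 = (\tfrac{1}{2} + \epsilon_d)(1, 1, \ldots, 1)$ with $\epsilon_d = 0.5$ for $d = 2$ and $\epsilon_d = 0.15$ for $d = 3$, and I would invoke Lemma~\ref{claim_1} to conclude that $\sigma_2$ contains every point of $\P_1$ but none of $\P_2 \cup \{c_1\}$, so Bob is forced to place some $p_{\pi(2)} \in \P_1$. For later rounds $i = 2, \ldots, d$ I would maintain the two invariants that (I) $\sigma_i$ misses every previously placed hit, and (II) $\sigma_i$ still contains every not-yet-used element of $\P_1$. Alice's inductive choice $c_{i+1}$ is the one from~(\ref{eqn}): zero on the coordinates $\Pi = \{\pi(2), \ldots, \pi(i)\}$ that Bob has already exploited, and equal to $(3/2)^{i-1}(1/2 + \epsilon_d)$ on the remaining coordinates.

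To verify (I) I would compute $\mathrm{dist}(c_{i+1}, h)^2$ for each potential previous hit, splitting into the cases $h = c_1$, $h \in \P_2$ (with sub-cases depending on whether its nonzero coordinate lies in $\Pi$), and $h = p_{\pi(k)} \in \P_1$; in every case the geometric growth factor $(3/2)^{i-1}$ on the $d-i+1$ unused coordinates is designed to push the squared distance strictly above $1$. For (II) I would compute, for any unused $p_k \in \P_1$,
\[
\mathrm{dist}(c_{i+1}, p_k)^2 = \bigl((3/2)^{i-1}(1/2 + \epsilon_d) - 1\bigr)^2 + (d-i)\bigl((3/2)^{i-1}(1/2 + \epsilon_d)\bigr)^2,
\]
and check that this is at most $1$ for the stated $\epsilon_d$ whenever $i \le d$. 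The main obstacle, and precisely the reason for the dimension restriction, is this last inequality: its right-hand side grows with both $d$ and $i$, and no admissible $\epsilon_d$ keeps it below $1$ through round $d+1$ once $d \ge 4$. With the invariants established, Bob's $d+1$ distinct hits $h_1, p_{\pi(2)}, \ldots, p_{\pi(d+1)}$ are forced, while the single point $p_{\pi(d+1)}$ lies in $\sigma_1$ (as $\P_1 \subset \sigma_1$) and in every later ball by invariant (II), yielding an adversary-to-optimum ratio of $d+1$.
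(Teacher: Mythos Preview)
Your proposal is essentially identical to the paper's own argument: the same adversary game, the same first ball at the origin with $\Q(\sigma_1)=\{c_1\}\cup\P_1\cup\P_2$, the same choice of $c_2$ with the specific constants $\epsilon_2=0.5$ and $\epsilon_3=0.15$, the same appeal to Lemma~\ref{claim_1}, the same two invariants, the same inductive centers $c_{i+1}$ from~(\ref{eqn}), the same case split for invariant~(I), the same distance computation~(\ref{eqn:why}) for invariant~(II), and the same observation that this last inequality is what forces $d<4$. There is nothing to add.
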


\noindent
\textbf{Remark~1:} In equation~\eqref{eqn:why}, for any $\epsilon_d>0$ and $d\geq 4$, {the value of $dist(c_{i+1},p_k)^2$ is strictly greater than 1. As a result, invariant (II) is not satisfied. Thus, the proof is only valid for $d< 4$.}


\section{Hitting Set Problem for Unit Hypercubes}\label{sec:d-hyp}
{In this section, we start by presenting $\BPA$ algorithms for unit hypercubes in $\IR^2$ and $\IR^3$. After that, we present some structural properties of hypercubes in $\IR^d$ that will play a crucial role in the analysis of the algorithm, $\RIR$,  for unit hypercubes in $\IR^d$ ($d\geq 3$). Finally, we give a lower bound for the problem.}

\subsection{Unit Hypercubes in $\IR^2$ and $\IR^3$}
Let $\Lambda_d=\{\alpha_1 {\bf e}_1+\alpha_2 {\bf e}_2+\ldots+\alpha_d {\bf e}_d\ |\ \alpha_i\in \mathbb{Z},\ \forall i\in[d]\} $  be the integer lattice in $\IR^d$ generated by standard unit vectors ${\bf e}_1$, ${\bf e}_2,\ldots,{\bf e}_d$. Consider a subset $\chi_d\subset \Lambda_d$ defined as follows:\\
$\chi_d=\{\alpha_1 {\bf u}_1+\alpha_2 {\bf u}_2+\ldots+\alpha_d {\bf u}_d\ |\ \alpha_i\in \mathbb{Z},\ \forall\ i\in[d]\}$.
Here, we have
\[
 {\bf u}_i=
\begin{cases}
 2{\bf e}_1, & \text{for } i=1\\
   {\bf e}_{i-1}+2{\bf e}_i,& \text{for } i\in[d]\setminus\{1\}.
  
\end{cases}
\]

\begin{lemma}\label{lem:correct_hyp}
For {$d\in\mathbb{N}$}, each unit hypercube $H_d(r,1)$ centered at any point $r\in\IR^d$ contains at least one point of $\chi_d$.
\end{lemma}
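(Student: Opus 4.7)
\medskip
\noindent\textbf{Proof plan.} The approach is to exploit the bidiagonal structure of the basis defining $\chi_d$ and construct the coefficients $\alpha_1,\ldots,\alpha_d$ in reverse order, one at a time. Write any candidate point $p = \sum_{i=1}^d \alpha_i \mathbf{u}_i \in \chi_d$ in coordinates. Using $\mathbf{u}_1 = 2\mathbf{e}_1$ and $\mathbf{u}_i = \mathbf{e}_{i-1}+2\mathbf{e}_i$ for $i\ge 2$, one reads off
\[
p(x_d) = 2\alpha_d, \qquad p(x_j) = 2\alpha_j + \alpha_{j+1} \quad \text{for } j \in [d-1].
\]
To show $p \in H_d(r,1)$ it suffices to choose the $\alpha_i$ so that $|p(x_j)-r(x_j)|\le 1$ for every $j\in[d]$.

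\medskip
\noindent The construction proceeds by backward induction on $j$. First I would choose $\alpha_d \in \mathbb{Z}$ so that $|2\alpha_d - r(x_d)|\le 1$; this is possible because the required condition is equivalent to $\alpha_d \in \bigl[(r(x_d)-1)/2,\,(r(x_d)+1)/2\bigr]$, a closed interval of length $1$, which must contain at least one integer. Next, assuming $\alpha_{j+1},\ldots,\alpha_d$ have already been fixed, pick $\alpha_j \in \mathbb{Z}$ satisfying $|2\alpha_j+\alpha_{j+1}-r(x_j)|\le 1$; again this is equivalent to $\alpha_j$ lying in the closed interval $\bigl[(r(x_j)-\alpha_{j+1}-1)/2,\,(r(x_j)-\alpha_{j+1}+1)/2\bigr]$ of length $1$, which contains an integer. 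Continuing down to $j=1$ produces coefficients $\alpha_1,\ldots,\alpha_d\in\mathbb{Z}$ and hence a point $p\in\chi_d$ all of whose coordinates lie within distance $1$ of the corresponding coordinates of $r$, so $p\in H_d(r,1)$.

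\medskip
\noindent\textbf{Where the work sits.} There is no real obstacle, because the upper-bidiagonal form of the basis matrix with $2$'s on the diagonal decouples the coordinate conditions into a triangular system: the condition on $p(x_j)$ constrains $\alpha_j$ once $\alpha_{j+1}$ is known, and the resulting one-variable interval always has length exactly $1$, hence contains an integer. The only thing to be careful about is the order of elimination; attempting to fix $\alpha_1$ first would leave the constraint on $p(x_d)=2\alpha_d$ uncontrollable, so the induction must be carried out from $j=d$ downward to $j=1$.
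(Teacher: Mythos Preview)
Your proof is correct. The key observation that every closed interval of length~$1$ contains an integer handles each coordinate constraint, and the upper-triangular structure of the basis matrix (with $2$'s on the diagonal and $1$'s on the super-diagonal) lets you solve for the $\alpha_j$ one at a time from $j=d$ down to $j=1$. Nothing is missing.

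Your route is genuinely different from the paper's. The paper argues geometrically: it observes that any $r$ lies in some integer hypercube $H_d(s',\tfrac12)$ (a unit-side cube with integer corners), that this small cube is contained in $H_d(r,1)$, and then proves by induction on $d$ that every integer hypercube contains at least one point of $\chi_d$. The inductive step slices along the last coordinate and reduces to the $(d-1)$-dimensional claim. Your approach, by contrast, is purely algebraic: you exploit the bidiagonal shape of the basis to build the lattice point coordinate-by-coordinate via backward substitution, with no induction on dimension and no auxiliary lattice. Your argument is shorter and more constructive; the paper's argument yields the slightly stronger intermediate fact that every integer unit cube already contains a point of $\chi_d$, which is a structural property one might want independently.
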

\begin{proof}

Let \blue{$\Lambda_d'=\{p+{\bf v}\ |\ p\in\Lambda_d\text{ and }{\bf v}=\left(\frac{1}{2},\frac{1}{2},\ldots,\frac{1}{2}\right)\in\IR^d\}$} (for description of $\Lambda_2'$, see Fig.~\ref{fig:lambda'}). 
We need to show that $H_d(r,1)$ contains at least one point from $\chi_d$.
Note that there exists a point $s'\in\Lambda_d'$ such that $r$ belongs to the integer hypercube $H_d\left(s',\frac{1}{2}\right)$. Since for any $x\in H_d\left(s',\frac{1}{2}\right)$  the $dist_{\infty}(r,x)\leq 1$, the hypercube $H_d\left(s',\frac{1}{2}\right)$ is totally contained in $H_d\left(r,1\right)$.
\begin{clm}\label{clm:cont}
    For $d\in\mathbb{N}$, each integer hypercube $H_d\left(s,\frac{1}{2}\right)$ contains at least one integer point of $\chi_d$, where $s\in\Lambda_d'$.
\end{clm}
 \begin{proof}
 We  prove this claim using induction on $d$.
In the base case of the induction, for $d=1$ and 2, it is easy to observe from Fig.~\ref{fig:l1} that each integer hypercube $H_i\left(s,\frac{1}{2}\right)$ contains exactly one integer point of $\chi_d$.
For $d\in\{3,4,\ldots,i\}$, let us assume that the induction hypothesis holds. 
Now, to complete the proof, we need to show that the induction is also true for $d=i+1$. Now, for any $k\in \mathbb{Z}$, we define the hyperplane $P_k=\{x\in\IR^{i+1}\ |\ x_{i+1}=k\}$.
Notice that the integer hypercube $H_{i+1}\left(s,\frac{1}{2}\right)$ contains integer points from two consecutive hyperplanes $P_k$ and $P_{k+1}$ for some $k\in\mathbb{Z}$. As per the definition of $\chi_{i+1}$, exactly one of the hyperplanes $P_k$ and $P_{k+1}$ contains points from $\chi_{i+1}$, and the other one does not contain any point from $\chi_{i+1}$. Without loss of generality, let us assume that the hyperplane $P_k$ contains points of $\chi_{i+1}$. Notice that $\chi_{i+1}\cap P_k=\chi_i$ and $H_{i+1}\left(s,\frac{1}{2}\right)\cap P_k= H_{i}\left(s',\frac{1}{2}\right)$, for some $s'\in\Lambda_i'$. 
Due to the induction hypothesis, for any point $p\in\Lambda_{i}'$, the integer hypercube $H_{i}\left(p ,\frac{1}{2}\right)$ contains at least one point from $\chi_{i}$. Thus, $H_{i+1}\left(s,\frac{1}{2}\right)$ contains at least one point of $\chi_{i+1}$ from the hyperplane $P_k$.\end{proof}
 Due to Claim~\ref{clm:cont}, each integer hypercube $H_d\left(s,\frac{1}{2}\right)$  contains at least one integer point of $\chi_d$. As a result, the hypercube $H_d(r,1)$  contains at least one point of $\chi_d$. Hence, the lemma follows.
 \end{proof}
 
\begin{theorem}\label{cube_ub}
For hitting unit hypercubes in $\IR^3$ using points from $\mathbb{Z}^3$, there exists a deterministic online algorithm that  achieves a competitive ratio of at most~8.
\end{theorem}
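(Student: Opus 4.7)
The plan is to follow the pattern of Theorem~\ref{3d-ball}. Define algorithm $\BPA$ to maintain a hitting set $\A \subseteq \chi_3$, and on arrival of a new input unit cube $\sigma$ not already hit by $\A$, add the best-point of $\chi_3 \cap \sigma$ to $\A$; correctness is immediate from Lemma~\ref{lem:correct_hyp} applied with $d = 3$. For the competitive analysis, let $\I$ be the input sequence, $\OO$ an offline optimum, $\A' = \A \setminus (\A \cap \OO)$, $\OO' = \OO \setminus (\A \cap \OO)$, and for each $p \in \OO'$ let $\A_p \subseteq \A'$ collect the points added by the algorithm upon arrivals of cubes in $\I_p = \{\sigma \in \I : p \in \sigma\}$. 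Since $\A' = \cup_{p \in \OO'} \A_p$, it suffices to prove $|\A_p| \leq 8$ for every $p \in \OO'$; the ratio $|\A|/|\OO| \leq 8$ then follows by the routine bookkeeping already used in Theorem~\ref{3d-ball}.

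The key step is the bound $|\A_p| \leq 8$, which I would obtain by a ``binary branching in each coordinate'' argument. Every $h \in \A_p$ is the best-point of $\chi_3 \cap \sigma$ for some cube $\sigma = H_3(r, 1)$ with $r \in H_3(p, 1)$, and hence $h \in H_3(p, 2)$. The generators $2{\bf e}_1$, ${\bf e}_1 + 2{\bf e}_2$, ${\bf e}_2 + 2{\bf e}_3$ of $\chi_3$ force the $x_3$-coordinate of any $\chi_3$-point to lie in $2\mathbb{Z}$; conditional on the $x_3$-value, the $x_2$-coordinate lies in a coset of $2\mathbb{Z}$; conditional on $(x_2, x_3)$, the $x_1$-coordinate lies in a coset of $2\mathbb{Z}$. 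Because $\sigma$ has side $2$, each of its three coordinate ranges contains at most two admissible $\chi_3$-values, and the $\prec$-maximum (the best-point) picks the larger one in each direction, with priority to $x_3$, then $x_2$, then $x_1$. As $r(x_3)$ varies over $[p(x_3)-1, p(x_3)+1]$, the best-point's $x_3$-coordinate realizes at most two distinct values; conditional on this value, the $x_2$-coordinate realizes at most two values as $r(x_2)$ varies; conditional on $(x_2, x_3)$, the same happens for $x_1$. Multiplying gives at most $2 \times 2 \times 2 = 8$ distinct best-points, whence $|\A_p| \leq 8$.

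The main obstacle is a careful verification that this $2 \times 2 \times 2$ count is genuinely the correct upper bound, that is, that each of the eight candidate combinations is realized by a true input cube $\sigma \ni p$ and that no extra branching arises from the parity coupling between consecutive coordinates in $\chi_3$. I expect the verification to go through smoothly: because any length-$2$ interval in $\IR$ contains integers of each parity, the ``upper'' and ``lower'' choices in each direction are always well defined. When $p \in \chi_3$, one of the eight candidates coincides with $p$ itself, which cannot enter $\A_p$ since adding $p$ would contradict $p \in \OO'$, so actually $|\A_p| \leq 7$ in that case; when $p \notin \chi_3$, all eight candidates are realizable by appropriate choices of $r$, and the bound $|\A_p| \leq 8$ is tight.
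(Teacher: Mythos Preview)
Your proposal is correct and uses the same algorithm and competitive-analysis framework as the paper, but the key step—bounding $|\A_p|$—is argued quite differently. The paper (Lemma~\ref{cube}) proceeds by a five-way case split on the parity of $p(x_3)$ and the size of $\chi(H_3(p,1))$: in each case it lists all $12$–$19$ points of $\chi\cap H_3(p,2)$ with the aid of figures and rules out, one by one, those that cannot arise as the best-point of a cube through $p$, leaving at most $8$ survivors (and $7$ when $p\in\chi$). Your route avoids the enumeration entirely by exploiting the triangular structure of the generators of $\chi_3$: the best-point map $r\mapsto (x_1^*,x_2^*,x_3^*)$ factors coordinate by coordinate, with the parity of $x_i^*$ fixed by $(x_{i+1}^*,\dots,x_3^*)$, and on an interval of length~$2$ the ``largest integer of a prescribed parity'' function takes at most two values; multiplying gives $2^3=8$. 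This is cleaner, picture-free, and extends verbatim to $|\A_p|\le 2^d$ in $\IR^d$—a deterministic bound the paper does not record. For a polished write-up you should make explicit the one point you flag as an obstacle: for every even $z$ with $|z-r(x_3)|\le 1$ there really is a $\chi_3$-point in $\sigma$ with $x_3=z$ (and likewise one level down), because each remaining length-$2$ coordinate interval meets both cosets of $2\mathbb{Z}$; this is what justifies computing $x_3^*$ first as a function of $r(x_3)$ alone and guarantees no ``extra branching''. The tightness remarks in your last paragraph are not needed for the theorem and can be dropped.
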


\begin{proof}
For the sake of simplicity, throughout the proof, we use $\chi$ instead of $\chi_3$. For any $k\in  \mathbb{Z}$, we use $P_k$  to denote the plane parallel to $xy$-plane with $z$-coordinate value $k$. 
The projections of  planes  $P_{2k}, P_{2k+1}$ and $ P_{2k+2}$  over a rectangular region are depicted in Fig.~\ref{fig:l1}, Fig.~\ref{fig:l2} and  Fig.~\ref{fig:l3}, respectively. 
Note that $P_{2k+1}\cap\chi=\phi$. Also, observe that $P_{2k}\cap\chi$ and $P_{2k+2}\cap\chi$ are translated copy of each other by 1 unit in $y$-coordinate.
Algorithm $\BPA$ maintains a hitting set $\A$ consisting of points from $\chi$. On receiving a new input unit cube $\sigma$, if it is not hit by any of the points from $\A$ then the algorithm adds the best-point from $\chi$ lying inside $\sigma$ to the set $\A$. Correctness of the algorithm follows from Lemma~\ref{lem:correct_hyp}.

Let  $\I$ be the set of input unit cubes  presented to the algorithm.
Let $\OO$ be an offline optimal hitting set for $\I$.  
 Let $\A'=\A\setminus \{\A\cap \OO\}$ and  $\OO'=\OO\setminus \{\A\cap \OO\}$. 
Let $p \in \OO'$ be an integer point and let $\I_p\subseteq \I$ be the set of input unit cubes  containing the point $p$.
 Let $\A_{p}\subseteq \A'$ be the set of points our algorithm will place to hit explicitly when some unit cube in $\I_p$ arrives. In the following lemma, we prove that the cardinality of $\A_p$ is bounded by~8.
 Since $\A'=\cup_{p\in \OO'}\A_p$, we have $|\A'|\leq\sum_{p\in \OO'}|\A_p|\leq 8\times |\OO'|$. Note that $\frac{|\A'| }{|\OO'|}\leq 8$ implies $\frac{|\A|}{|\OO|}\leq 8$.
Thus, the competitive ratio of our algorithm is at most~8. \end{proof}
\begin{lemma}\label{cube}
$|\A_p|\leq 8$.
\end{lemma}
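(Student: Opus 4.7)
The plan is to pin down exactly which points in $\chi_3 \cap H_3(p,2)$ can arise as the best-point of $\chi_3 \cap \sigma$ for some $\sigma \in \I_p$, and then bound the number of such triples by a greedy coordinate-by-coordinate argument. First, since every $\sigma \in \I_p$ has center $c \in H_3(p,1)$, the cube $\sigma$ is contained in $H_3(p,2)$, so $\A_p \subseteq \chi_3 \cap H_3(p,2)$; this reduces the problem to bounding the image of the best-point map when $c$ ranges over $H_3(p,1)$.

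Next, I will exploit the defining equations $\chi_3 = \{(2\alpha_1+\alpha_2,\,2\alpha_2+\alpha_3,\,2\alpha_3) : \alpha_i \in \mathbb{Z}\}$, which force any $\chi_3$-point to have $z$ even, $y$ of parity $z/2 \pmod{2}$, and $x$ of a parity determined once $(y,z)$ are fixed. Because any length-$2$ real interval contains integers of both parities, for each even $z$ in $[c_z-1,c_z+1]$ some $\chi_3$-point of $\sigma$ lies at that $z$-level (and the same is true at each subsequent stage). Combined with the definition of $\prec$, which reads coordinates from $z$ down to $x$, the best-point $q = (q_x,q_y,q_z)$ of $\chi_3 \cap \sigma$ is described greedily by
\begin{align*}
q_z &= \max\{z \in 2\mathbb{Z} : z \in [c_z-1,c_z+1]\},\\
q_y &= \max\{y \in \mathbb{Z} : y \equiv q_z/2 \pmod{2},\ y \in [c_y-1,c_y+1]\},\\
q_x &= \max\{x \in \mathbb{Z} : x \equiv (q_y - q_z/2)/2 \pmod{2},\ x \in [c_x-1,c_x+1]\}.
\end{align*}

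The counting then reduces to an elementary observation: for any real $p^*$ and parity $\epsilon \in \{0,1\}$, the quantity $\max\{k \in \mathbb{Z} : k \equiv \epsilon \pmod{2},\ k \in [c-1,c+1]\}$ takes at most two distinct values as $c$ ranges over the length-$2$ interval $[p^*-1,p^*+1]$, since $c+1$ sweeps through an interval of length $2$ and crosses at most one same-parity integer threshold. Applying this in succession to $c_z, c_y, c_x$ yields at most $2 \cdot 2 \cdot 2 = 8$ admissible triples $(q_x,q_y,q_z)$, which bounds $|\A_p| \leq 8$. The main delicate point is the parity-chaining across coordinates: the admissible parity for $q_y$ depends on the actual value of $q_z$, and likewise for $q_x$; however, once $q_z$ is fixed (to one of at most two values), the parity for $q_y$ is determined and the two-values claim applies unchanged at the next stage, so the product bound is not inflated. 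When $p \in \chi_3$, the point $p$ is itself one of the eight candidate best-points but is excluded from $\A_p \subseteq \A \setminus (\A \cap \OO)$, giving the sharper bound $|\A_p| \leq 7$; in the seven nontrivial cosets of $\mathbb{Z}^3/\chi_3$, the bound $8$ can be attained.
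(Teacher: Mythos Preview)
Your proof is correct and takes a genuinely different route from the paper's.

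The paper proceeds by a five-way case split on the parity of $p(z)$ and the value of $|\chi(H_3(p,1))|$, using pictures of the three (or two) relevant planes to list the points of $\chi\cap H_3(p,2)$ and then to rule out, one by one, those that can never be chosen as a best-point of some $\sigma\in\I_p$. Your argument is uniform: you first give an explicit greedy description of the best-point of $\chi_3\cap\sigma$ via the parity chain $z\in 2\mathbb{Z}$, $y\equiv z/2\pmod 2$, $x\equiv (y-z/2)/2\pmod 2$, and then observe that for a fixed parity the function $c\mapsto\max\{k\equiv\epsilon:\,k\in[c-1,c+1]\}$ takes at most two values as $c$ sweeps a length-$2$ interval. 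Since the three center coordinates vary independently over $[p_i-1,p_i+1]$, the product bound $2\cdot 2\cdot 2=8$ follows immediately.

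What each approach buys: your argument is shorter, avoids all pictures and case analysis, and generalises verbatim to $\chi_d$ (yielding $|\A_p|\le 2^d$ for the analogous lattice in $\IR^d$). The paper's case analysis is more hands-on and extracts slightly finer information in each sub-case (e.g.\ which specific points of $\chi\cap H_3(p,2)$ are excluded), which is the same granularity you recover only in the special case $p\in\chi_3$ via the $\A_p\subseteq\A'$ observation.

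One small caveat: your closing sentence asserts that the bound $8$ is \emph{attained} in each of the seven nontrivial cosets of $\mathbb{Z}^3/\chi_3$. This tightness claim is not part of the lemma, is not proved in your argument, and is not established in the paper either; it should be dropped or stated as a separate remark with its own justification. It does not affect the correctness of the upper bound.
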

\begin{proof}
 Observe that the center of each $\sigma \in \I_p$ lies in the region $H_3(p,1)$, and to hit unit cubes of $\I_p$, our algorithm places integer points from $\chi(H_3(p,2))$.
 Therefore, $\A_p$ contains points from $\chi(H_3(p,2))$.
Let $p(z)$ be the z-coordinate value of the point $p$.
 Note that the cube $H_3(p,2)$ contains integer points only from five planes, 
 namely, $P_{p(z)+2},P_{p(z)+1},P_{p(z)},P_{p(z)-1}$ and $P_{p(z)-2}$. 
 As per the definition of $\chi$, if $p(z)$ is odd then $P_{p(z)+2}\cap\chi,\ P_{p(z)}\cap\chi,\ P_{p(z)-2}\cap\chi$ are empty, otherwise $P_{p(z)+1}\cap\chi$, $P_{p(z)-1}\cap\chi$ are empty.

\begin{figure}[htbp]
          \centering
          \hfill
     \begin{subfigure}[b]{0.24\textwidth}
         \centering
         \includegraphics[width=28 mm]{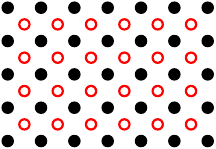}
         \caption{}
         \label{fig:lambda'}
     \end{subfigure}
     \hfill
          \begin{subfigure}[b]{0.24\textwidth}
         \centering
    \includegraphics[width=28 mm]{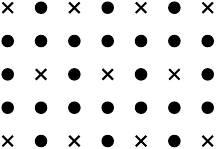}
    \caption{}
    \label{fig:l1}
     \end{subfigure}
     \hfill
     \begin{subfigure}[b]{0.24\textwidth}
         \centering
         \includegraphics[width=28 mm]{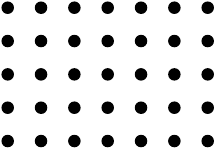}
         \caption{}
         \label{fig:l2}
     \end{subfigure}
       \begin{subfigure}[b]{0.24\textwidth}
         \centering
         \includegraphics[width=28 mm]{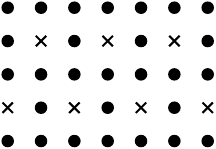}
         \caption{}
         \label{fig:l3}
     \end{subfigure}
     \caption{(a) The points of $\Lambda$ and $\Lambda'$ are represented in black and red color, respectively. {The projections of  planes  over a rectangular region (b) $P_{2k}$, (c) $P_{2k+1}$ and (d) $P_{2k+2}$.}}
   \label{fig:plane_part}
       \end{figure}
    
      \begin{figure}[htbp]
     \centering
     \hfill
     \begin{subfigure}[b]{0.33\textwidth}
         \centering
    \includegraphics[width=32 mm]{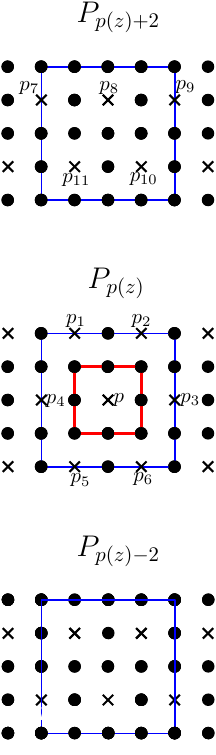}
    \caption{}
    \label{case-1_1}
     \end{subfigure}
     \hfill
     \begin{subfigure}[b]{0.33\textwidth}
         \centering
         \includegraphics[width=32 mm]{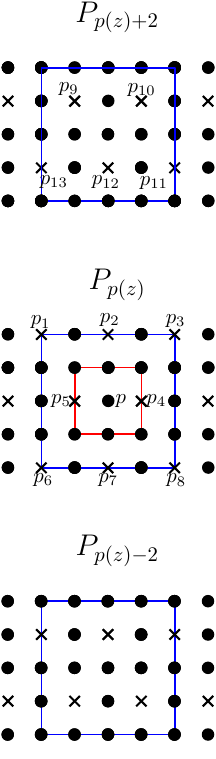}
         \caption{}
         \label{case-2_1}
     \end{subfigure}
     \hfill
       \begin{subfigure}[b]{0.32\textwidth}
         \centering
         \includegraphics[width=32 mm]{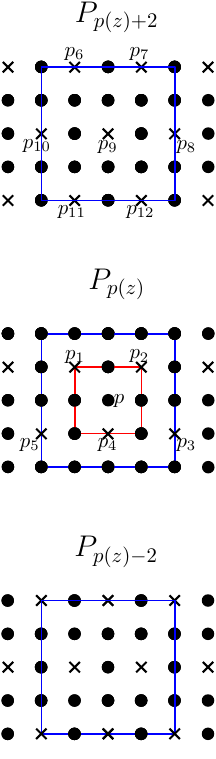}
         \caption{}
         \label{case-3_1}
     \end{subfigure}

         \caption{Illustration of Case~1: Here, boundaries of $H_3(p,1)$ and $H_3(p,2)$ are marked in red and blue colors, respectively, (a) Case~1.1, (b) Case~1.2 and (c) Case~1.3.
         }
   \label{fig:hq_ub_main}
\end{figure}

      \begin{figure}[htbp]
     \centering
    \hfill
     \begin{subfigure}[b]{0.49\textwidth}
         \centering
    \includegraphics[width=32 mm]{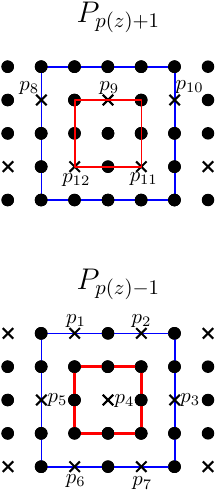}
    \caption{}
   \label{case-4_2}
     \end{subfigure}
     \hfill
     \begin{subfigure}[b]{0.5\textwidth}
         \centering
         \includegraphics[width=32 mm]{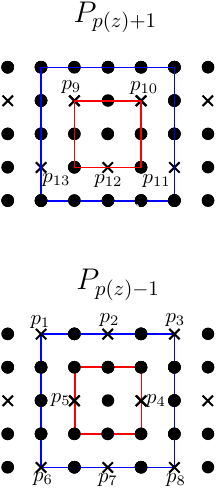}
         \caption{}
         \label{case-5_2}
     \end{subfigure}
         \caption{Illustration of Case~2: Here, boundaries of $H_3(p,1)$ and $H_3(p,2)$ are marked in red and blue color, respectively, (a) Case~2.1 and (b) Case~2.2.}
   \label{fig:hq_ub_main_b}
\end{figure}
\noindent
\textbf{Case 1 :} $p(z)$ is even. In this case, $|\chi(H_3(p,1))|$ is either 1,2 or 3. Depending on the value of $|\chi(H_3(p,1))|$, we have the following three subcases.\\ 
\textbf{Case 1.1 :} $|\chi(H_3(p,1))|=1$. Observe that, in this case, $p\in\chi$.
Representative figures of planes  $P_{p(z)-2}$, $P_{p(z)}$ and $P_{p(z)+2}$ intersecting the cube $H_3(p,2)$ are  shown in Fig.~\ref{case-1_1}.
Observe that  $H_3(p,2)$ contains 17 integer points of $\chi$ including $p$. As per the definition of $\A_p$, we know $p\notin \A_p$. 
Remember that any unit cube $\sigma\in\I_p$  contains  the point $p$. For  any point $p'\in P_{p(z)-2}$, we have $p'\prec p$.  Thus, our algorithm does not add $p'$ to $\A_p$. As a result, none of the points of $P_{p(z)-2}$ are in $\A_p$.
Similarly, it is easy to see $p_4,p_5,p_6\prec p$.  Thus, points $p_4,p_5, p_6 \notin\A_p$. 
Now, consider  any unit cube $\sigma_1\in\I_p$ that contains the point $p_7$. Note that $\sigma_1$ must also contain $p_8$, and $p_7\prec p_8$. Therefore, our algorithm does not add $p_7$ to $\A_p$ upon the arrival of $\sigma_1$.  Putting all these together, all five points from $P_{p(z)-2}$, $\{p,p_4,p_5,p_6\}$ from $P_{p(z)}$ and $p_{7}$ from $P_{p(z)+2}$ are not in $\A_p$.  Therefore, we have $|\A_p|\leq 7$.
\\
\textbf{Case 1.2 :} $|\chi(H_3(p,1))|=2$. In this case, it is easy to observe that the plane $P_{p(z)}$  contains both the  points of $\chi(H_3(p,1))$.
Representative figures of planes $P_{p(z)-2}$, $P_{p(z)}$ and $P_{p(z)+2}$ intersecting the cube $H_3(p,2)$ are  depicted in Fig.~\ref{case-2_1}. It is easy to see that the cube  $H_3(p,2)$ contains 18 integer points of $\chi$.
 Notice that any unit cube $\sigma\in\I_p$ that contains $p$ must also contain either $p_4,p_5$ or both from the plane $P_{p(z)}$. Observe that for  any point $p'\in P_{p(z)-2}$, we have $p'\prec p_4, p_5$.  As a result, none of the points from the plane $P_{p(z)-2}$ are in $\A_p$.
Similarly, since $p_6,p_7\prec p_4,p_5$, we know that $p_6, p_7 \notin\A_p$.
 Observe that any unit cube  that contains $p$ and $p_1$ (respectively, $p_8$) must also contain the point $p_2$ (respectively $p_4$). Since $p_1\prec p_2$ and $p_8\prec p_4$, we know that $p_1, p_8 \notin\A_p$. 
 Similarly, any unit cube that contains the point $p$ and $p_{13}$, must also contain $p_{12}$, and we have $p_{13}\prec p_{12}$. Thus, $p_{13}\notin\A_p$. 
Combining all of these, we know that  all five points from $P_{p(z)-2}$, $\{p_1,p_6,p_7,p_8\}$ from $P_{p(z)}$ and $p_{13}$ from $P_{p(z)+2}$ are not in $\A_p$. Hence, we have $|\A_p|\leq 8$. \\
\noindent
\textbf{Case 1.3:} $|\chi(H_3(p,1))|=3$. 
Planes $P_{p(z)-2}$, $P_{p(z)}$ and $P_{p(z)+2}$  intersecting the cube $H_3(p,2)$ are depicted in Fig.~\ref{case-3_1}. 
Observe that the cube $H_3(p,2)$ contains only 19 integer points of $\chi$.
Notice that any unit cube that contains $p$ must contain either $p_1,p_2$ or $p_4$ from $p_{p(z)}$. On the other hand, for any point $p'\in P_{p(z)-2}$,  we know that $p'\prec p_1,p_2,p_4$. Thus, our algorithm does not add any of the points from the plane $P_{p(z)-2}$ to $\A_p$. Any unit cube that contains $p$ and $p_5$ must also contain $p_4$, and we know that $p_5\prec p_4$.  Therefore, $p_5 \notin \A_p$. Now, observe that any unit cube that contains $p$ and  some point from $P_{p(z)+2}$ must also contain the point $p_9$. Since $p_{10},p_{11}, p_{12} \prec p_9$, we know that $p_{10},p_{11}, p_{12} \notin \A_p$. 
After putting all of these together, we know that all seven points from $P_{p(z)-2}$, $p_5$ from $P_{p(z)}$ and $\{p_{10},p_{11},p_{12}\}$ from $P_{p(z)+2}$ are not in $\A_p$. Thus, we have $|\A_p|\leq 8$. \\

\noindent
\textbf{Case 2 :} $p(z)$ is odd. In this case  $|\chi(H_3(p,1))|$ is either 4 or 5. Depending on the value of $|\chi(H_3(p,1))|$, we have the following two subcases.\\
\textbf{Case 2.1:} $|\chi(H_3(p,1))|=4$. 
Representative figures of planes  $P_{p(z)-1}$ and $P_{p(z)+1}$ intersecting the cube $H_3(p,2)$ are  depicted in Fig.~\ref{case-4_2}.
Observe that the cube $H_3(p,2)$ contains only 12 integer points of $\chi$. 
Observe that any unit cube that contains $p$ and 
any of the points in $\{p_5,p_6,p_7\}$ must also contain  $p_4$, and we know that  $p_5,p_6,p_7\prec p_4$. Similarly, any unit cube that contains $p$ and $p_8$ must also contain $p_9$, and we have $p_8\prec p_9$. As a result,  $p_5,p_6,p_7$ and $p_{8}$  are not in $\A_p$. Hence, $|\A_p|\leq 8$. \\
\textbf{Case 2.2:} $|\chi(H_3(p,1))|=5$. 
Planes $P_{p(z)-1}$ and $P_{p(z)+1}$ intersecting the cube $H_3(p,2)$ are shown in Fig.~\ref{case-5_2}. 
The cube $H_3(p,2)$ contains 13 integer points of $\chi$.
Similar to Case 1.2, one can observe that none of the points $p_1,p_6,p_7,p_8$ from $P_{p(z)-1}$ and $p_{13}$ from $P_{p(z)+1}$ are in $\A_p$. As a result, $|\A_p|\leq 8$. 
\end{proof}

\begin{theorem}\label{square_ub}
For hitting unit squares using points from $\mathbb{Z}^2$, there exists a deterministic online algorithm that  achieves a competitive ratio of at most~4.
\end{theorem}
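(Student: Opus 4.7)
The plan is to follow the template of Theorem~\ref{cube_ub} in one lower dimension. I run the algorithm $\BPA$ with filter-set $\chi = \chi_2$ as defined at the start of Section~\ref{sec:d-hyp}. By Lemma~\ref{lem:correct_hyp} specialized to $d=2$, every incoming unit square $\sigma \subset \IR^2$ contains at least one point of $\chi_2$, so on each miss the algorithm can add the best-point of $\chi_2(\sigma)$ to $\A$; correctness is immediate.

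For the competitive analysis, I reuse the charging scheme. Let $\OO$ be an offline optimum, $\A'=\A\setminus(\A\cap\OO)$ and $\OO'=\OO\setminus(\A\cap\OO)$, and for $p\in\OO'$ let $\I_p$ be the input squares containing $p$ and $\A_p\subseteq\A'$ the points placed explicitly to hit members of $\I_p$. Since the centers of squares in $\I_p$ all lie in $H_2(p,1)$, any point the algorithm places to hit some $\sigma\in\I_p$ is an element of $\chi_2(H_2(p,2))$. Hence it is enough to prove $|\A_p|\le 4$, after which $|\A'|\le \sum_{p\in\OO'}|\A_p|\le 4|\OO'|$ gives $|\A|/|\OO|\le 4$.

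The core of the proof is therefore a planar analogue of Lemma~\ref{cube}. I split on the parity of $p(y)$ and on $|\chi_2(H_2(p,1))|$. Since $\chi_2$ populates only even-$y$ rows and alternates the parity of the admissible $x$-coordinate between consecutive such rows, a straightforward enumeration shows that $\chi_2(H_2(p,2))$ contains at most $8$ candidate points when $p(y)$ is even and only $5$ when $p(y)$ is odd. In the easy subcases (e.g.\ $p(y)$ odd, or $p\in\chi_2$) the raw count is already small enough or drops below $4$ after excluding $p$ itself. In the subcases where the raw count is $7$ or $8$, I rule out the excess candidates exactly as in Lemma~\ref{cube}: any $\sigma\in\I_p$ that contains a candidate $q$ from the far row of $H_2(p,2)$ must simultaneously contain a $\prec$-smaller candidate $q'$ from the row adjacent to $p$, so $\BPA$ never chooses $q$.

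The main obstacle will be the subcase in which $p(y)$ is even and $H_2(p,1)$ already contains three $\chi_2$-points on the central row, because then the outer rows of $H_2(p,2)$ each supply three additional candidates and the raw count reaches $8$. Here I will need to show that at least four of those $8$ candidates can never be best-points: the points on the row $P_{p(y)-2}$ are all $\prec$-smaller than every candidate on the central row that is forced to appear in any $\sigma\in\I_p$ containing them, and a symmetric argument removes one extremal candidate from the row $P_{p(y)+2}$. Once the six easy exclusions are in place, the remaining cases follow by the same best-point reasoning used in Case~1.3 and Case~2.2 of Lemma~\ref{cube}, and the bound $|\A_p|\le 4$ holds uniformly.
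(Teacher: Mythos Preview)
Your plan mirrors the paper's proof: same filter-set $\chi_2$, same $\BPA$ algorithm, same charging to $\OO'$, and the same reduction to $|\A_p|\le 4$. The paper carries out exactly three cases, $|\chi_2(H_2(p,1))|\in\{1,2,3\}$, by pointing back to the single central plane $P_{p(z)}$ in Cases~1.1--1.3 of Lemma~\ref{cube}.

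However, your description of the cases is garbled in ways that would derail a write-up. First, the central row of $H_2(p,1)$ (three consecutive integer $x$-values) can hold at most two points of $\chi_2$, never three; the case $|\chi_2(H_2(p,1))|=3$ occurs only when $p(y)$ is \emph{odd}, and then $|\chi_2(H_2(p,2))|=5$, not $8$. The genuine $8$-point case is $p(y)$ even with $p\notin\chi_2$, where the central row contributes two points and each of the rows $P_{p(y)\pm2}$ contributes three. Second, none of your ``easy subcases'' is free: $p(y)$ odd gives $5$ candidates (and then $p\notin\chi_2$, so removing $p$ is unavailable), while $p\in\chi_2$ gives $7$; both still require nontrivial exclusions. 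Third, your exclusion sentence has the $\prec$-direction reversed: to rule out $q$ you must exhibit a $\prec$-\emph{larger} $\chi_2$-point in $\sigma$, since $\BPA$ selects the $\prec$-maximum. Finally, the top-row argument is not ``symmetric'' to the bottom-row one: points on $P_{p(y)+2}$ have the largest $y$-coordinate and therefore $\prec$-dominate every central-row candidate, so the witness that eliminates the leftmost top-row point must itself lie on the top row (this is the $2$-dimensional analogue of ``$\sigma_1$ must also contain $p_8$, and $p_7\prec p_8$'' in Case~1.1 of Lemma~\ref{cube}).
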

\begin{proof}

The proof is similar to Theorem~\ref{cube_ub}. Here,
the integer lattice $\Lambda=\{q {\bf e}_1+r {\bf e}_2| q,r\in \mathbb{Z}\}$ in $\IR^2$ is  generated by standard unit vectors ${\bf e}_1$ and ${\bf e}_2$.
Consider a subset $\chi\subset\Lambda$ defined as follows: $\chi=\{q {\bf u}+r {\bf v} |  q,r\in \mathbb{Z}\}$, where ${\bf u}=2{\bf e}_1$ and ${\bf v}={\bf e}_1+2{\bf e}_2$.  Next, we prove that $|\A_p|\leq 4$ in this case. Hence, the competitive ratio of algorithm $\BPA$ is at most 4.

Note that a unit square centered at an integer point contains at least one and at most 3 points of $\chi$ (see plane $P_{p(z)}$ in Fig.~\ref{case-1_1},~\ref{case-2_1} and~\ref{case-3_1}). Hence, we have the following three cases. \\
\textbf{Case 1:} $|\chi(H_2(p,1))|=1$. Note that $H_2(p,2)$ contains 7 integer points of $\chi$ including $p$ (see plane $P_{p(z)}$ in Fig.~\ref{case-1_1}). With the similar argument of Case 1.1 of Lemma~\ref{cube},   one can easily notice that none of the points $\{p_4,p_5,p_6\}$ are in $\A_p$.
As a result $|\A_p|\leq 4$.\\
\textbf{Case 2:} $|\chi(H_2(p,1))|=2$. Observe that $H_2(p,2)$ contains only eight integer points of $\chi$ (see plane $P_{p(z)}$ in Fig.~\ref{case-2_1}). With the similar argument of Case 1.2 of Lemma~\ref{cube}, it is easy to observe that none of the points $\{p_1,p_6,p_7,p_8\}$ are in $\A_p$. As a result, $|\A_p|\leq 4$.\\
\textbf{Case 3:} $|\chi(H_2(p,1))|=3$. Note that $H_2(p,2)$ contains only five integer points of $\chi$ (see plane $P_{p(z)}$ in Fig.~\ref{case-3_1}). With the similar argument of Case 1.3 of Lemma~\ref{cube}, one can see that $p_5$ is not in $\A_p$.
Hence, $|\A_p|\leq4$. \end{proof}

\subsection{Unit Hypercubes in $\mathbb{R}^d$ ($d\geq 3$)}\label{sec:hyp}

We first present some concepts that we will utilize to analyze the algorithm we propose for unit hypercubes in $\mathbb{R}^d$, where $d\geq 3$. Let $\cal F$ be the family of all possible unit hypercubes in $\IR^d$.
Any pair of unit hypercubes $\sigma_i$ and $\sigma_j$ in  $\cal F$ are said to be \emph{related} if $\Q(\sigma_i)=\Q(\sigma_j)$, in other words, each of them contains the same set of integer points. 
So, we have an equivalence relation on $\cal F$ where each \emph{equivalence class} corresponds to a set $S$ of unit hypercubes such that each $\sigma\in S$ contains the same set of integer points.

Let $\sigma$ be a unit hypercube centered at a point $c\in\IR^d$. Partition $[d]$ into two sets $\K_1$ and $\K_2$ such that 
 for each $i\in\K_1$, the value of $c(x_i)$ is non-integer and for each $i\in\K_2$, the value of $c(x_i)$ is integer. Let $r\in{\cal Q}(\sigma)$ be  an integer point. For any $i\in\K_1$, the value of $r(x_i)$ can be one from the two possible values: $\{\lfloor c(x_i) \rfloor,\lceil c(x_i) \rceil\}$, and for any $i\in\K_2$, the value of $r(x_i)$ can be one from the three possible values $\{c(x_i)-1,c(x_i),c(x_i)+1\}$.  Hence, ${\cal Q}(\sigma)$ contains exactly $2^{|\K_1|}3^{|\K_2|}$ integer points. 
The following lemma is an important ingredient for classifying the equivalence classes.

\begin{lemma}\label{Lemma_1}
Let $\sigma_1$ and $\sigma_2$ be two  unit hypercubes centered at $c_1$ and $c_2$ in $\IR^d$, respectively. Both $\sigma_1$ and $\sigma_2$ contain the same set of integer points if and only if 
$[d]$ can be partitioned into two sets $\K_1$ and $\K_2$
 such that 
\begin{itemize}
\item  
for each $i\in \K_1$, the value of the $i$th coordinate of $c_1$ and $c_2$ is non-integer 
and $\lfloor c_1(x_i)\rfloor=\lfloor c_2(x_i)\rfloor$,

\item for each $i\in\K_2$, the value of 
 the $i$th coordinate  of  $c_1$ and $c_2$ is same,  i.e., $c_1(x_i)=c_2(x_i)$.
\end{itemize}
\end{lemma}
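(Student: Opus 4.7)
The plan is to reduce the $d$-dimensional statement to a conjunction of one-dimensional statements by exploiting the product structure of axis-aligned hypercubes. First I would observe that since $H_d(c,1) = \prod_{i=1}^d [c(x_i)-1,\, c(x_i)+1]$, the integer points of $\sigma = H_d(c,1)$ factor as $\Q(\sigma) = \prod_{i=1}^d \bigl(\mathbb{Z} \cap [c(x_i)-1,\, c(x_i)+1]\bigr)$. This immediately implies that $\Q(\sigma_1) = \Q(\sigma_2)$ if and only if, for every $i \in [d]$, the one-dimensional projections satisfy $\mathbb{Z} \cap [c_1(x_i)-1,\, c_1(x_i)+1] = \mathbb{Z} \cap [c_2(x_i)-1,\, c_2(x_i)+1]$.

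Next I would analyze the one-dimensional case coordinate by coordinate. For a single real $c \in \IR$, the set $\mathbb{Z} \cap [c-1,\, c+1]$ equals $\{\lfloor c \rfloor,\, \lceil c \rceil\}$ (two consecutive integers) when $c$ is non-integer, and equals $\{c-1,\, c,\, c+1\}$ (three consecutive integers) when $c$ is integer. Using this, the 1D projections for $c_1(x_i)$ and $c_2(x_i)$ can coincide in exactly two ways: either both are non-integer and the two pairs $\{\lfloor c_1(x_i)\rfloor, \lceil c_1(x_i)\rceil\}$ and $\{\lfloor c_2(x_i)\rfloor, \lceil c_2(x_i)\rceil\}$ match (equivalently $\lfloor c_1(x_i)\rfloor = \lfloor c_2(x_i)\rfloor$), or both are integer and the triples match (equivalently $c_1(x_i) = c_2(x_i)$). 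The remaining cases are ruled out because sets of size $2$ and $3$ cannot coincide, so a non-integer/integer mismatch in any coordinate immediately forces $\Q(\sigma_1) \neq \Q(\sigma_2)$.

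Defining $\K_1 = \{i \in [d] : c_1(x_i)\text{ is non-integer}\}$ and $\K_2 = [d] \setminus \K_1$, the forward direction follows from the coordinate-wise analysis above. For the reverse direction, if the partition $(\K_1, \K_2)$ with the stated properties exists, then by the same 1D analysis the projections agree in every coordinate, and the product formula gives $\Q(\sigma_1) = \Q(\sigma_2)$.

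I do not anticipate any serious obstacle here: the only subtlety is ruling out a mixed case where $c_1(x_i)$ is integer while $c_2(x_i)$ is not (or vice versa) for some $i$, which is settled simply by counting that the corresponding 1D projections have different cardinalities. Once the product decomposition of $\Q(\sigma)$ is stated, both implications reduce to straightforward coordinate bookkeeping.
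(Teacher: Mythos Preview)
Your proposal is correct and takes a different route from the paper. You exploit the product structure $\Q(H_d(c,1)) = \prod_{i=1}^d \bigl(\mathbb{Z}\cap[c(x_i)-1,c(x_i)+1]\bigr)$ to reduce both implications to $d$ independent one-dimensional questions, which you then settle by a cardinality count (the one-dimensional slice contains two or three integers according as the center coordinate is non-integer or integer). The paper instead works directly in $\IR^d$: for the forward direction it argues by contrapositive, exhibiting an explicit witness $r$ with $r(x_j)=\lfloor c_1(x_j)\rfloor+1$ lying in $\Q(\sigma_1)\setminus\Q(\sigma_2)$ whenever some $\lfloor c_1(x_i)\rfloor\neq\lfloor c_2(x_i)\rfloor$; for the converse it verifies $\Q(\sigma_1)\subseteq\Q(\sigma_2)$ by checking $dist_{\infty}(c_2,r_1)\leq 1$ coordinate by coordinate. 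Your decomposition yields a shorter, symmetric argument and automatically disposes of the mixed case (some coordinate integer for $c_1$ but non-integer for $c_2$ with equal floors) via the $2$-versus-$3$ size mismatch; the paper's hands-on construction avoids stating the product formula, but as written its contrapositive only treats the case of differing floors and leaves that mixed case implicit.
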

\begin{proof}
For the forward direction, we prove the contrapositive statement:  ``If there exists some $i\in[d]$ such that $\lfloor c_1(x_i)\rfloor\neq\lfloor c_2(x_i)\rfloor$, then ${\cal Q}(\sigma_1)\neq{\cal Q}(\sigma_2)$". 
Since  $\lfloor c_1(x_i) \rfloor\neq\lfloor c_2(x_i)\rfloor$,   without loss of generality, let us assume that $c_1(x_i)>c_2(x_i)$.
 Let us consider a point $r$ whose $j$th coordinate 
 is defined as $r(x_j)=\lfloor c_1(x_j) \rfloor +1$, for all $j\in [d]$.
  It is easy to note that the point $r\in {\cal Q}(\sigma_1)$.
Since the difference between $c_2(x_i)$ and $r(x_i)$ is more than one, the point $r\notin {\cal Q}(\sigma_2)$. Hence, we have ${\cal Q}(\sigma_1)\neq{\cal Q}(\sigma_2)$.

Now, we consider the converse part. Assume that for each $i\in\K_1$, $\lfloor c_1(x_i)\rfloor=\lfloor c_2(x_i)\rfloor$, and for each $i\in\K_2$, $c_1(x_i)=c_2(x_i)$. We need to prove that ${\cal Q}(\sigma_1)={\cal Q}(\sigma_2)$. First, we prove that $ {\cal Q}(\sigma_1)\subseteq {\cal Q}(\sigma_2)$. The other case ($ {\cal Q}(\sigma_2)\subseteq {\cal Q}(\sigma_1)$) is symmetric in nature. Let $r_1\in{\cal Q}(\sigma_1)$. 
 For each $i\in\K_1$, $r_1(x_i)$ has only two possibilities from $\{\lfloor c_1(x_i) \rfloor,\lceil c_1(x_i)\rceil\}$. Since  $\lfloor c_1(x_i)\rfloor=\lfloor c_2(x_i)\rfloor$, the difference between $c_2(x_i)$ and $r_1(x_i)$ is at most one. For each $i\in\K_2$, $r_1(x_i)$ has three possibilities from $\{c_1(x_i)-1,c_1(x_i),c_1(x_i)+1\}$. Since $c_1(x_i)=c_2(x_i)$,  the difference between $c_2(x_i)$ and $r_1(x_i)$ is at most one. As a result, $dist_{\infty}(c_2,r_1)\leq 1$. Hence, we have $r_1\in{\cal Q}(\sigma_2)$. \end{proof}

Using the above lemma, we prove the next two lemmas that will play an important role in analysing our algorithm.
We have $d+1$ types of equivalence classes depending on the number of integer points they cover.
We refer to an equivalence class that contains exactly $2^k3^{d-k}$ integer points as \emph{an equivalence class of  Type-($k$)}, where $k\in[d]\cup\{0\}$.
By careful observation, one can note the following.
\begin{lemma}\label{claim:subset}
Let $\sigma$ be a unit hypercube in $\mathbb{R}^d$, centered at a point $c\in\IR^d$, belonging to some equivalence class of  Type-($k$), where $k\in[d-1]\cup\{0\}$. There exists a set ${\mathbb S}_{\sigma}$ of distinct $2^{(d-k)}$ equivalence classes of  Type-($d$) such that ${\cal Q}(\sigma)=\cup_{\sigma'\in{\mathbb S}_{\sigma}}{\cal Q}\left(\sigma'\right)$.
\end{lemma}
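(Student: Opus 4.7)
The plan is to construct explicit representatives of the $2^{d-k}$ equivalence classes by perturbing the center $c$ in its integer coordinates. Let $\K_1,\K_2$ be the partition of $[d]$ associated with $\sigma$ as in the definition of Type-($k$), so $|\K_1|=k$ and $|\K_2|=d-k$. For each sign pattern $f\colon \K_2\to\{-1,+1\}$, pick an arbitrary $\epsilon\in(0,\tfrac12)$ and define a perturbed center $c_f\in\IR^d$ by $c_f(x_i)=c(x_i)$ for $i\in\K_1$ and $c_f(x_i)=c(x_i)+f(i)\cdot\epsilon$ for $i\in\K_2$. Let $\sigma_f=H_d(c_f,1)$ and let ${\mathbb S}_\sigma$ be the set of equivalence classes that contain the $2^{d-k}$ hypercubes $\{\sigma_f\}_f$.

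I would then verify three things in order. First, each $\sigma_f$ is a Type-($d$) hypercube: for $i\in\K_1$, $c_f(x_i)=c(x_i)$ is non-integer by hypothesis, and for $i\in\K_2$, $c_f(x_i)=c(x_i)\pm\epsilon$ is non-integer because $c(x_i)$ is integer and $\epsilon\in(0,\tfrac12)$. Second, the classes are distinct: for any two distinct sign patterns $f\neq g$ there exists $i\in\K_2$ with $f(i)\neq g(i)$, whence $\lfloor c_f(x_i)\rfloor\neq\lfloor c_g(x_i)\rfloor$ (one equals $c(x_i)$, the other $c(x_i)-1$), so Lemma~\ref{Lemma_1} forces $\Q(\sigma_f)\neq\Q(\sigma_g)$.

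Third, I would establish $\Q(\sigma)=\bigcup_f \Q(\sigma_f)$ by a coordinatewise product argument. For $i\in\K_1$ the integer coordinates available in $\sigma$ and in every $\sigma_f$ are the same pair $\{\lfloor c(x_i)\rfloor,\lceil c(x_i)\rceil\}$, since the $i$th coordinate of the center is unchanged. For $i\in\K_2$ the integer coordinates available in $\sigma$ are the three values $\{c(x_i)-1,c(x_i),c(x_i)+1\}$, while $\sigma_f$ contributes the pair $\{c(x_i),c(x_i)+1\}$ when $f(i)=+1$ and $\{c(x_i)-1,c(x_i)\}$ when $f(i)=-1$; as $f$ ranges over all sign patterns, the union of these pairs over the two choices at coordinate $i$ is exactly the full triple. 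Taking the Cartesian product over all coordinates gives the claimed equality of $\Q(\sigma)$ with the union $\bigcup_f\Q(\sigma_f)$.

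The only mildly subtle step is the inclusion $\Q(\sigma_f)\subseteq\Q(\sigma)$, where one must check that each integer point $r$ contributed by $\sigma_f$ still satisfies $dist_\infty(c,r)\le 1$; this is immediate because $r(x_i)\in\{c(x_i)-1,c(x_i),c(x_i)+1\}$ for $i\in\K_2$ by construction of $c_f$ and the choice $\epsilon<\tfrac12$, and $r(x_i)\in\{\lfloor c(x_i)\rfloor,\lceil c(x_i)\rceil\}$ for $i\in\K_1$. I do not anticipate a genuine obstacle beyond bookkeeping; the main content is the observation that splitting the three-element set of integers near an integer coordinate of $c$ into its two overlapping consecutive pairs corresponds exactly to the two sign choices for the perturbation, and then invoking Lemma~\ref{Lemma_1} for distinctness.
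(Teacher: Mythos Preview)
Your proposal is correct and follows essentially the same approach as the paper: the paper also perturbs the integer coordinates of $c$ by $\pm\epsilon$ (indexed by binary strings rather than sign patterns, but this is cosmetic), invokes Lemma~\ref{Lemma_1} for distinctness, and then verifies the two inclusions $\Q(\sigma_f)\subseteq\Q(\sigma)$ and $\Q(\sigma)\subseteq\bigcup_f\Q(\sigma_f)$ coordinatewise. Your Cartesian-product phrasing of the equality is a slightly cleaner packaging of the same verification.
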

\begin{proof}
Let  $\K_1$ and $\K_2$ be the partition of $[d]$ depending upon the noninteger coordinates of the center $c$ such that $|\K_1|=k$.  For the sake of simplicity, let us assume that  the first $d-k$ indices of  $[d]$  belong to $\K_2$, and  the remaining $k$ indices belong to $\K_1$.
Now, we construct $2^{d-k}$ many  hypercubes.
Let $0\leq t < 2^{d-k}$ be an integer, and  
let $t_1t_2\ldots t_{d-k}$ be the  binary representation of $t$.
Let $\sigma_t$ be a unit hypercube, centered at $c_t$  such that
for each $i\in\K_1$, $c_t(x_i)$ is equal to $c(x_i)$, and 
for each $i\in\K_2$, the $i$th coordinate of $c_t$ is defined as follows.
\begin{equation*}\label{c_p}
c_t(x_i)=
    \begin{cases}
   c(x_i)+\epsilon,& \text{if $t_i=0$}\\
     c(x_i)-\epsilon, & \text{otherwise (i.e., if $t_i=1$)},
    \end{cases}      
\end{equation*}
 where $c$ is the center of the hypercube $\sigma$ belonging to an equivalence class of  Type-($k$) and  $0<\epsilon<1$ is a fixed arbitrary constant close to zero. 
Let ${\mathbb S}_{\sigma}=\{\sigma_0,\sigma_1,\ldots,\sigma_{2^{d-k}-1}\}$.
 Due to Lemma~\ref{Lemma_1}, it is easy to observe that each hypercube in ${\mathbb S}_{\sigma}$ belongs to  a distinct equivalence class. Therefore, ${\mathbb S}_{\sigma}$ consists of $2^{d-k}$ distinct equivalence classes of  Type-($d$). 
 
Now, we show that  ${\cal Q}(\sigma')\subseteq{\cal Q}(\sigma)$, for each $\sigma'\in{\mathbb S}_{\sigma}$ centered at $c'$.
Let $r'\in {\cal Q}(\sigma')$ be an integer point. For each $i\in\K_1$, $r'(x_i)$ has only two possibilities from $\{\lfloor c(x_i) \rfloor,\lceil c(x_i)\rceil\}$. Therefore, the difference between $c(x_i)$ and $r'(x_i)$ is at most one. On the other hand, for each $i\in\K_2$, $r'(x_i) $ has two possibilities from $\{\lfloor c'(x_i)\rfloor,\lceil c'(x_i)\rceil\}$. 
Since the value of $c'(x_i)$ is either $c(x_i)+\epsilon$ or $c(x_i)-\epsilon$, it is easy to observe that 
the difference between $c(x_i)$ and $r'(x_i)$ is at most one. As a result, $r'$ belongs to ${\cal Q}(\sigma)$. Hence, for any $\sigma'\in{\mathbb S}_{\sigma}$, ${\cal Q}(\sigma')\subseteq{\cal Q}(\sigma)$. Therefore, $\cup_{\sigma'\in{\mathbb S}_{\sigma}} {\cal Q}(\sigma')\subseteq {\cal Q}(\sigma)$.

Now, we prove that $ {\cal Q}(\sigma)\subseteq\cup_{\sigma'\in{\mathbb S}_{\sigma}}{\cal Q}(\sigma')$. Let $r\in{\cal Q}(\sigma)$  be an integer point. Now, we need to construct some $c'$ such that $\sigma'\in{\mathbb S}_{\sigma}$, centered at $c'$, contains the point $r$.
 Let us define, for each $i\in\K_1$, $c'(x_i)$ to be equal to $c(x_i)$. Note that, for each $i\in\K_2$, $r(x_i)$ has three possibilities from $\{c(x_j)-1,c(x_j),c(x_j)+1\}$. For each $i\in\K_2$, the $i$th coordinate of $c'$ is constructed as follows.
\begin{equation*}\label{c_q}
c'(x_i)=
    \begin{cases}
    c(x_i)-\epsilon,& \text{if $r(x_i)=c(x_i)-1$}\\
     c(x_i)+\epsilon, & \text{otherwise}.
    \end{cases}      
\end{equation*}
Observe that the hypercube $\sigma'$ defined above belongs to ${\mathbb S}_{\sigma}$.
Since the distance (under $L_{\infty}$ norm) between $r$ and $c'$ is at most one, the hypercube $\sigma'$ contains the point $r$.
Hence, $r\in\cup_{\sigma'\in\s}{\cal Q}(\sigma')$. As a result, we have ${\cal Q}(\sigma)\subseteq\cup_{\sigma'\in{\mathbb S}_{\sigma}}{\cal Q}(\sigma')$.
 Therefore, ${\cal Q}(\sigma)=\cup_{\sigma'\in{\mathbb S}_{\sigma}}{\cal Q}(\sigma')$. \end{proof}

 \begin{lemma}\label{hit_hyp}
Each integer point $p\in\mathbb{Z}^d$ is contained in exactly $2^d$ distinct equivalence classes of  Type-($d$), where $d\in\mathbb{N}$.
\end{lemma}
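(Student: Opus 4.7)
The plan is to use the classification of Lemma~\ref{Lemma_1} to set up a bijection between Type-$(d)$ equivalence classes and $\mathbb{Z}^d$, and then count how many of these correspond to classes containing a given integer point $p$.

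First I would observe that a Type-$(d)$ equivalence class corresponds to hypercubes $\sigma$ whose center $c$ has $|\K_1|=d$ and $|\K_2|=0$, i.e., every coordinate of $c$ is non-integer. By Lemma~\ref{Lemma_1}, two such hypercubes $\sigma_1,\sigma_2$ lie in the same equivalence class if and only if $\lfloor c_1(x_i)\rfloor = \lfloor c_2(x_i)\rfloor$ for every $i\in[d]$. This gives a bijection between Type-$(d)$ equivalence classes and tuples $(a_1,\ldots,a_d)\in \mathbb{Z}^d$, via $a_i = \lfloor c(x_i)\rfloor$.

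Next, for a hypercube $\sigma$ in the class corresponding to $(a_1,\ldots,a_d)$, the discussion preceding Lemma~\ref{Lemma_1} gives $\Q(\sigma) = \prod_{i=1}^{d}\{a_i,a_i+1\}$, since each non-integer coordinate of $c$ contributes the two integers $\lfloor c(x_i)\rfloor$ and $\lceil c(x_i)\rceil$. Consequently, a fixed integer point $p\in\mathbb{Z}^d$ lies in $\Q(\sigma)$ if and only if $a_i\in\{p(x_i)-1,\,p(x_i)\}$ for every $i\in[d]$.

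Finally, I would simply count: each of the $d$ coordinates independently offers exactly $2$ admissible values of $a_i$, yielding $2^d$ distinct tuples $(a_1,\ldots,a_d)$, and therefore $2^d$ distinct Type-$(d)$ equivalence classes containing $p$. I do not foresee a real obstacle here; the only care needed is to invoke Lemma~\ref{Lemma_1} correctly so that different tuples are guaranteed to produce genuinely different equivalence classes, which is exactly what the ``only if'' direction of that lemma supplies.
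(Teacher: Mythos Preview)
Your proposal is correct and follows essentially the same approach as the paper: both use Lemma~\ref{Lemma_1} to identify Type-$(d)$ equivalence classes with the open unit integer cells (equivalently, with the tuples $(a_1,\ldots,a_d)=(\lfloor c(x_1)\rfloor,\ldots,\lfloor c(x_d)\rfloor)$), and then count that exactly $2^d$ such cells have $p$ as a corner. The paper phrases the count geometrically---the unit hypercube $H_d(p,1)$ contains exactly $2^d$ integer hypercubes---while you phrase it combinatorially via the two admissible values of each $a_i$; these are the same argument.
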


\begin{proof}
Let $\sigma$ be a hypercube, centered at $c$, belonging to an equivalence class of  Type-($d$).
Due to Lemma~\ref{Lemma_1}, any hypercube $\sigma'$ belongs to the same equivalence class of $\sigma$ if and only if for each $i\in[d]$, $c'(x_i)$ lies in the open interval $\left(\lfloor c(x_i)\rfloor,\lceil c(x_i)\rceil\right)$. In other words, the center of each of these hypercubes $\sigma'$ lies in the interior of an integer hypercube that contains the point $c$.
Therefore, the interior of each integer hypercube represents centers of hypercubes belonging to an equivalence class of  Type-($d$).
\noindent
Let $H_p$ be a $d$-dimensional unit hypercube centered at an integer point $p$. Note that all the centers of unit hypercubes containing the point $p$ must lie in $H_p$. Observe that the  unit hypercube $H_p$ contains  exactly $2^d$ many integer hypercubes. This implies that the integer point $p$ is contained in exactly $2^d$ many equivalence classes of  Type-($d$). \end{proof}

To obtain the upper bound,  for $d\geq 3$,   we propose an $O(d^2)$-competitive {algorithm, $\textsc{Randomized-}$ $\textsc{Iterative-Reweighting}$}, that is similar in nature to an algorithm from~\cite{DumitrescuT22}, which was presented for covering integer points using integer hypercubes in the online setup.

\textit{\textbf{Algorithm $\RIR$:}} Let $\I$ be the set of hypercubes presented to the algorithm and $\A$ be the set of points chosen by our algorithm such that each hypercube in $\I$ contains at least one point from $\A$. The algorithm maintains two disjoint sets $\A_1$ and $\A_2$ such that $\A= \A_1 \cup \A_2$. The algorithm also maintains another set $\B$ of points for bookkeeping purposes; initially, each of the set $\I,\A\text{ and }\B$ are empty. A weight function  $w$ over all integer points is also maintained by the algorithm; initially, $w(p)=3^{-(d+1)}$, for all points $p \in \mathbb{Z}^d$. One iteration of the algorithm is described below.

Let $\sigma$ be a new hypercube; update $\I = \I \cup\{\sigma\}$. Note that $|\Q(\sigma)|$ is at least $2^d$ and at most $3^d$.
\begin{itemize}
\item[1.] If the hypercube $\sigma$ contains any point from $\A$, then do nothing.
\item[2.] Else if the hypercube $\sigma$ contains any point from $\B$, then let $p\in \B\cap\QH$ be an arbitrary point, and update $\A_1 = \A_1 \cup\{p\}$.
\item[3.] Else if $\sum_{p\in \QH} w(p)\geq 1$, then let $p$ be an arbitrary point in $\QH$, and update $\A_2 = \A_2 \cup\{p\}$.
\item[4.] Else, the weights give a probability distribution on $\QH$. Successively choose points from $\QH$ at random with this distribution in $\lceil\frac{5d}{2}\rceil$ independent trails and add them to $\B$. Let $p\in \B\cap\QH$ be an arbitrary point, and update $\A_1 = \A_1 \cup\{p\}$. Triple the weight of every point in $\QH$.
\end{itemize}

Now, we analyze the performance  of the above algorithm.
\begin{theorem}\label{hyp_ub}
 {The algorithm $\RIR$} achieves a competitive ratio of at most~$O(d^2)$ for hitting unit hypercubes using points in $\mathbb{Z}^d$, where $d\geq3$.
\end{theorem}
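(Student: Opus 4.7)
The plan is to bound $|\A|=|\A_1|+|\A_2|$ by $O(d^2)|\text{OPT}|$, where OPT denotes any offline optimal hitting set for the input sequence. The $|\A_1|$ contribution is handled deterministically via a tripling/weight argument, and the $|\A_2|$ contribution in expectation via a Chernoff-style analysis of the randomized sampling in step~4, both exploiting the structural lemmas established in this subsection.

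First I would bound the number of step-4 executions. Fix $q\in\text{OPT}$. Any step-4 event triggered by an input $\sigma$ with $q\in\Q(\sigma)$ triples $w(q)$, yet its precondition $\sum_{p\in\Q(\sigma)}w(p)<1$ forces $w(q)<1$ beforehand. Since weights are monotone nondecreasing and $w(q)=3^{-(d+1)}$ initially, $w(q)$ traces the geometric sequence $3^{-(d+1)},3^{-d},3^{-(d-1)},\ldots$, each term strictly less than $1$ just before its triggering event, so at most $d+1$ such events can occur per $q$. Because every input hypercube $\sigma$ contains an OPT point in $\Q(\sigma)$, charging each step-4 event to any such OPT point gives at most $(d+1)|\text{OPT}|$ step-4 events in total. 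Every element of $\A_1$ resides in $\B$ (whether placed there by step~4 or read back by step~2), and $\B$ gains exactly $\lceil 5d/2\rceil$ points per step-4 event, so deterministically $|\A_1|\leq|\B|\leq\lceil 5d/2\rceil(d+1)|\text{OPT}|=O(d^2)|\text{OPT}|$.

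For $\mathbb{E}[|\A_2|]$, note a step-3 event on $\sigma$ requires both $\sum_{p\in\Q(\sigma)}w(p)\geq 1$ and $\B\cap\Q(\sigma)=\emptyset$. Only prior step-4 events on some $\sigma'$ with $\Q(\sigma')\cap\Q(\sigma)\neq\emptyset$ can have driven the weights on $\Q(\sigma)$ up to this threshold, and each such event drew $\lceil 5d/2\rceil$ i.i.d.\ samples from the normalized weight distribution on $\Q(\sigma')$, landing in $\Q(\sigma)\cap\Q(\sigma')$ with probability equal to the relative mass $w(\Q(\sigma)\cap\Q(\sigma'))/w(\Q(\sigma'))$. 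Using Lemma~\ref{claim:subset} and Lemma~\ref{hit_hyp}, every integer-point set $\Q(\sigma)$ containing a fixed $q\in\text{OPT}$ decomposes into at most $2^d$ Type-$(d)$ equivalence classes, which organizes these mass contributions into a tractable case analysis; a Chernoff bound on the $\lceil 5d/2\rceil$ samples combined with a union bound over the $O(d)$ candidate triggers per OPT point yields $\mathbb{E}[|\A_2|]=O(d^2)|\text{OPT}|$, from which the claimed $O(d^2)$ competitive ratio follows.

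The hard part will be this final probabilistic step: the bad events $\{\B\cap\Q(\sigma)=\emptyset\}$ across different triggering hypercubes $\sigma$ are coupled through shared random samples, and the distribution at each step-4 event is conditional on the (random) accumulated weights of all prior events. Calibrating the sample count $\lceil 5d/2\rceil$ against the equivalence-class bound from the structural lemmas is exactly what makes the union bound tight enough to obtain the $O(d^2)$ ratio: a naive coupling produces an exponential loss, while the equivalence-class decomposition keeps each bad-event probability small enough that the aggregate expected step-3 contribution stays $O(d^2)|\text{OPT}|$.
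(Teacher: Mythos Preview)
Your treatment of $|\A_1|$ is correct and essentially identical to the paper's: bound the number of step-4 events by $O(d)|\opt|$ via the tripling argument, multiply by $\lceil 5d/2\rceil$ points per event, obtain $|\A_1|\le|\B|=O(d^2)|\opt|$.

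Your plan for $\mathbb{E}[|\A_2|]$ has a real gap. The phrase ``union bound over the $O(d)$ candidate triggers per OPT point'' does not correspond to any correct count, and the worry about coupling is misplaced. The paper's argument has two clean pieces that your sketch does not capture:
\begin{itemize}
\item[(a)] For any $\sigma$ on which step~3 could fire, the probability of $\B\cap\Q(\sigma)=\emptyset$ is at most $e^{-5d/6}$. The point is that the total weight on $\Q(\sigma)$ starts at $\le 3^d\cdot 3^{-(d+1)}=1/3$ and has reached $\ge 1$, so the increments $\delta_1,\dots,\delta_k$ from prior step-4 events satisfy $\sum\delta_i\ge 2/3$; in the $i$th such event the pre-tripling mass of $\Q(\sigma)\cap\Q(\sigma')$ was $\delta_i/2$, so each of the $\lceil 5d/2\rceil$ draws hits $\Q(\sigma)$ with probability $\ge\delta_i/2$, giving miss probability $\le\prod_i(1-\delta_i/2)^{\lceil 5d/2\rceil}\le e^{-5d/6}$.
\item[(b)] The number of \emph{distinct} $\sigma$ (up to equivalence) on which step~3 can ever fire is at most $2^d|\opt|$: by Lemma~\ref{claim:subset}, once any $\sigma$ is hit so is every Type-$(d)$ class in ${\mathbb S}_\sigma$, so it suffices to count Type-$(d)$ classes, and by Lemma~\ref{hit_hyp} each OPT point lies in exactly $2^d$ of them.
\end{itemize}
Linearity of expectation (not a union bound, and no coupling analysis needed) then gives $\mathbb{E}[|\A_2|]\le 2^d e^{-5d/6}|\opt|\le|\opt|$, since $e^{5/6}>2$. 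This is where the constant $5/2$ in $\lceil 5d/2\rceil$ is calibrated: it must beat the $2^d$ class count, not an $O(d)$ count. Your sketch identifies the right structural lemmas but misstates both the per-event probability computation and the counting step; as written, ``$O(d)$ triggers per OPT point'' would not close the argument.
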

\begin{proof}
Let  $\I$ be the set of $n$ hypercubes presented to our algorithm. Let $\OO$ be an offline optimum hitting set for $\I$.  Note that our algorithm creates two disjoint sets $\A_1\text{ and }\A_2$ such that $\A=\A_1\cup\A_2$ is a hitting set for $\I$.
From the description of the algorithm, it is easy to follow that $\A_1\subseteq \B$. We prove that $\mathbb{E}[|\B|]=O(d^2|\OO|)$ and $\mathbb{E}[|\A_2|]=O(|\OO|)$. This immediately implies that $\mathbb{E}[|\A|]\leq\mathbb{E}[|\A_1|] +\mathbb{E}[|\A_2|]\leq \mathbb{E}[|\B|]+\mathbb{E}[|\A_2|]=O(d^2|\OO|)$.

First, consider $\mathbb{E}[|\B|]$. Note that in the set $\B$, new points are added only in step 4 of the algorithm. In this case, the algorithm adds at most $\lceil\frac{5d}{2}\rceil$ points (independently) in $\B$ and triples the weight of every point in $\QH$. Each hypercube $\sigma\in \I$ contains some point $p\in \OO$. Initially, the weight of $p$ is $3^{-(d+1)}$, and it will never exceed 3. Since $p\in \QH$ and its weight before the last tripling must have been at most 1 in step 4 of the algorithm, its weight is tripled in at most $d+2$ iterations. Consequently, the algorithm invokes step 4  of the algorithm in at most $(d+2)|\OO|$ iterations. In each such iteration, the algorithm adds at most $\lceil\frac{5d}{2}\rceil$ points (independently) in the set $\B$. Therefore, we have $|\B|\leq \lceil\frac{5d}{2}\rceil(d+2)|\OO|=O(d^2|\OO|)$.

Next, we consider $\mathbb{E}[|\A_2|]$. Note that in the set $\A_2$, new points are added only in step 3 of the algorithm. In this case,  when a hypercube $\sigma$ arrives,  none of the points of $\QH$ is in $\B$ and $\sum_{p\in \QH} w(p)\geq 1$, and the algorithm increments the cardinality of the set $\A_2$ by one.
At the beginning of the algorithm, we have $W_{initially}=\sum_{p\in \QH} w(p)=\sum_{p\in\QH} 3^{-(d+1)}\leq 3^d  3^{-(d+1)}= \frac{1}{3}$. Suppose that the weights of the points in $\QH$ are increased in $k$ iterations (starting from the beginning of the algorithm), and the sum of weights of points in $\QH$ is increased by $\delta_1,\delta_2,\ldots,\delta_k>0$. When $\sigma$ arrives, the sum of the weights of all the points in $\Q(\sigma)$ is $W_{now}=W_{initially}+\sum_{i=1}^k\delta_i\geq 1$  and we know $\ W_{initially}\leq \frac{1}{3}$. This
implies that $\sum_{i=1}^k\delta_i\geq \frac{2}{3}$. For every $i\in[k]$, the sum of weights of some points in $\QH$, say $Q_i\subset \QH$ is increased by $\delta_i$ in step 4 of the algorithm. Since the weights are tripled, the sum of the weights of these points was $\frac{\delta_i}{2}$ at the beginning of that iteration. The algorithm added a point from $Q_i$ to $\B$ with probability at least $\frac{\delta_i}{2}$ in one random draw, which was repeated $\lceil\frac{5d}{2}\rceil$ times independently. As a result, the probability that the algorithm does not add any point from $Q_i$ to the set $\B$ is at most $\left(1-\frac{\delta_i}{2}\right)^{\lceil\frac{5d}{2}\rceil}$. The probability that none of the points of $\QH$ are added to $\B$ before the arrival of $\sigma$ is at most $ \prod_{i=1}^k \left(1-\frac{\delta_i}{2}\right)^{\lceil\frac{5d}{2}\rceil}\leq e^{-\lceil\frac{5d}{2}\rceil\sum_{i=1}^k\frac{\delta_i}{2}}\leq e^{-\frac{5d}{4}\sum_{i=1}^k\delta_i}\leq e^{-\frac{5d}{6}}$.
Since $\I$ is the set of hypercubes presented to the algorithm, step 3 of the algorithm can be invoked at most $|\I|$ times $e^{-\frac{5d}{6}}$. As a result, we have $\mathbb{E}[|\A_2|]\leq |\I|  e^{-\frac{5d}{6}}$.
Note that this is a very loose upper bound. 
Let $N$ be the set of distinct  equivalence classes containing all the hypercubes in $\cal I$.
Observe that if the algorithm hits  one hypercube from an equivalence class, then the algorithm executes only step 1 for all subsequent hypercubes coming from the same equivalence class. 
Therefore, step 3 of the algorithm can be invoked at most $|N|e^{-\frac{5d}{6}}$ times. We can further improve this bound  as follows.

Let $\sigma\in \I$. According to Lemma~\ref{claim:subset}, 
we have a set ${\mathbb S}_{\sigma}$ of equivalence classes of type-($d$) such that ${\cal Q}(\sigma)=\cup_{\sigma'\in{{\mathbb S}_{\sigma}}}{\cal Q}\left(\sigma'\right)$. 
Observe that  if  some hypercube $\sigma$ arrives and our algorithm needs to place a hitting point $p$ for it, then it implies that none of the hypercubes belonging to ${\mathbb S}_{\sigma}$ arrived before $\sigma$ to the algorithm.
Let $p\in {\cal Q}(\sigma')$ for some $\sigma' \in {\mathbb S}_{\sigma}$. Note that the point $p$ acts as a  hitting point for any  hypercube in $\I$ belonging to the same class of $\sigma'$. Not only that but $p$ also acts as a hitting point for all hypercubes $\sigma''\in \I$ such that $\sigma'\in {\mathbb S}_{\sigma''}$. Therefore, step 3 of the algorithm is invoked at most $|N_d|e^{-\frac{5d}{6}}$ times, where $N_d=\cup_{\sigma\in\I} {\mathbb S}_{\sigma}$.
 Hence, $\mathbb{E}[|\A_2|]\leq |N_d|e^{-\frac{5d}{6}}$.
Now, we  give an upper bound of $|N_d|$ in terms of $|\OO|$. Due to Lemma~\ref{hit_hyp}, we know that any arbitrary integer point $p\in\OO$ can be contained in at most  $2^d$ equivalence classes of Type-($d$) hypercubes. Thus, we have $|N_d|\leq 2^d  |\OO|)$.
 Since $\mathbb{E}[|\A_2|]\leq |N_d|  e^{-\frac{5d}{6}}$ and $|N_d|\leq 2^d |\OO|$, we have $\mathbb{E}[|\A_2|]\leq O\left(\left(\frac{2}{e^\frac{5}{6}}\right)^d |\OO| \right)\leq |\OO|$. Hence, the theorem follows. \end{proof}

\subsection{Lower Bound}
In this subsection, we present the lower bound of the competitive ratio for hitting unit hypercubes in $\IR^d$.

\begin{theorem}\label{hyp_lb}
The  competitive ratio of every deterministic online algorithm  for hitting hypercubes in $\mathbb{R}^d$ using points in $\mathbb{Z}^d$ is at least~$d+1$, where $d\in\mathbb{N}$.
\end{theorem}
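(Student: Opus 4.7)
The plan is to exhibit an adaptive adversary strategy that presents $d+1$ hypercubes, forces the online algorithm to use $d+1$ distinct hitting points, and leaves a single integer point $p^*$ in the common intersection $\bigcap_{i=1}^{d+1}\sigma_i$ (so the offline optimum is $1$). Treating the problem as a two-player game between an adversary and the online algorithm, the adversary will build the sequence adaptively.

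First, present $\sigma_1$ centered at $(\tfrac12,\tfrac12,\ldots,\tfrac12)$, so that $\Q(\sigma_1)=\{0,1\}^d$. In each round $i\in\{2,\ldots,d+1\}$, once the online algorithm places $h_{i-1}$, the adversary picks a fresh axis $k_{i-1}\in[d]\setminus\{k_1,\ldots,k_{i-2}\}$ (so $k_1,\ldots,k_d$ ends up as a permutation of $[d]$) and a sign $\delta_{i-1}\in\{-1,+1\}$ determined by the last play: set $\delta_{i-1}=+1$ if $h_{i-1}(x_{k_{i-1}})=0$, and $\delta_{i-1}=-1$ if $h_{i-1}(x_{k_{i-1}})=1$. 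This rule is well defined because $k_{i-1}$ has not been shifted before, so $\Q(\sigma_{i-1})$'s $k_{i-1}$th coordinate set is still $\{0,1\}$. The hypercube $\sigma_i$ is then centered at $c_i$ with $c_i(x_{k_m})=\tfrac12+\delta_m$ for every $m\leq i-1$ and $c_i(x_j)=\tfrac12$ otherwise; hence the $k_m$th coordinate set of $\Q(\sigma_i)$ equals $\{1,2\}$ if $\delta_m=+1$ and $\{-1,0\}$ if $\delta_m=-1$, while every unshifted coordinate still contributes $\{0,1\}$.

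Two invariants would then be verified. Invariant (i): $\sigma_i$ excludes $h_1,\ldots,h_{i-1}$, forcing a new hitting point in every round. By induction, the axis witnessing the exclusion of any $h_m$ with $m\leq i-2$ must already be among $\{k_1,\ldots,k_{i-2}\}$, for otherwise it would be unshifted with coordinate set $\{0,1\}$, and the fact that $h_m(x_{\cdot})\in\{0,1\}$ on every unshifted axis would contradict exclusion; in particular that witnessing axis differs from $k_{i-1}$, so its coordinate set is unchanged in $\sigma_i$ and the exclusion persists, while the new shift along $k_{i-1}$ excludes $h_{i-1}$ by the direction rule. Invariant (ii): the integer point $p^*=(s_1,\ldots,s_d)$, with $s_k=1$ if $\delta_k=+1$ and $s_k=0$ if $\delta_k=-1$, lies in every $\Q(\sigma_i)$, because $s_k\in\{0,1\}$ is both in the unshifted set $\{0,1\}$ and in the shifted set $\{s_k, s_k\pm 1\}$. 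After round $d+1$ all $d$ axes have been shifted exactly once, and a coordinatewise intersection gives $\bigcap_i \Q(\sigma_i)=\{p^*\}$; thus the offline optimum equals $1$ while the online algorithm is forced to use $d+1$ distinct hitting points, establishing the lower bound $d+1$. The main subtlety is coordinating the direction rule with the fresh-axis requirement: the rule hinges on $h_{i-1}(x_{k_{i-1}})\in\{0,1\}$, which in turn needs $k_{i-1}$ to be previously unshifted, and this is precisely why exactly one axis is consumed per round and why the $d+1$ rounds split into one base round plus $d$ shift rounds.
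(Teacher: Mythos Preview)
Your proof is correct and follows essentially the same adaptive-adversary strategy as the paper's: present one hypercube, then in each subsequent round shift along one fresh coordinate axis in the direction that excludes the last hitting point while preserving a common integer point in the running intersection, so that after $d+1$ rounds the algorithm has placed $d+1$ points but the offline optimum is $1$. The only cosmetic differences are that the paper centers $\sigma_1$ at the origin (so $|\Q(\sigma_1)|=3^d$) and shifts coordinate $j$ by $\pm(1+\epsilon)$ based on $h_j(x_j)$ using the natural axis order $1,2,\ldots,d$, whereas you center $\sigma_1$ at $(\tfrac12,\ldots,\tfrac12)$ (so $|\Q(\sigma_1)|=2^d$) and shift by $\pm 1$ along an arbitrarily chosen fresh axis.
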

\begin{proof}
 Let us consider a game between  two players: Alice and Bob. Here, Alice plays the role of an adversary, and Bob plays the role of an online algorithm. In each round of the game, Alice presents a new unit hypercube $\sigma\subset\mathbb{R}^d$ such that Bob needs to hit it by a new hitting point $h\in \mathbb{Z}^d$.
To prove the lower bound of the competitive ratio, we show by induction that Alice can  present a sequence of unit hypercubes  $\sigma_1,\sigma_2,\ldots,\sigma_{d+1}\subset \mathbb{R}^d$ adaptively, depending on the position of hitting points placed by Bob such that Bob needs to place $d+1$ integer points $\{h_1,h_2,\ldots,h_{d+1}\}$; whereas an offline optimum needs just one integer point.
Let $\sigma_{1}$ be a hypercube presented by Alice in the first round of the game.
For the sake of simplicity, we assume that the center of $\sigma_1$ is the origin.
 For $i=1,\ldots, {d+1}$, we maintain the following two invariants:
\begin{itemize}
    \item The hypercube $\sigma_i\subset \mathbb{R}^d$ does not contain any of the previously placed hitting point $h_j\in \mathbb{Z}^d$, for 
$j\in[i-1]$.
    \item The common intersection region $Q_i=\cap_{j=1}^{i}\sigma_j$  contains  $3^{(d-i+1)}$ integer points. 
\end{itemize}

  For $i=1$, the first invariant trivially holds. Since the unit hypercube $\sigma_1$ is centered at the origin, each coordinate of any integer point $p\in \sigma_1$ has three possible values from $\{-1,0,1\}$. As a result, the unit hypercube $\sigma_1$ contains $3^d$ integer points. Thus, the second invariant also holds.
  
  At the beginning of the round $i$ (for $i=2,\ldots,d$), let us assume that both invariants hold. Now, we only need to show that the induction is true for $i=d+1$. Let us define a translation vector ${\bf v}_i\in \mathbb{R}^{d}$ as follows:
${\bf v}_i=(s(1)(1+\epsilon),s(2)(1+\epsilon),\ldots,s(i-1)(1+\epsilon),0,\ldots,0)$, where $0<\epsilon<\frac{1}{2}$ is an arbitrary constant close to zero, and for any $j\in[i-1]$, we have
  \[
  s(j)= 
\begin{cases}
    +1,& \text{if}\ h_{j}(x_j)\leq 0,\text{ where}\ h_{j}(x_j)\text{ is $jth$ coordinate of $h_{j}$}, \\
    -1,              & \text{otherwise.}
\end{cases}
\]
We define the hypercube $\sigma_i=\sigma_1 +{\bf v}_i$. For any  $j\in[i-1]$, due to the definition of the $j$th component of the translation vector ${\bf v}_i$, the hypercube $\sigma_i$ does not contain the point $h_j$. Hence, the first invariant is maintained. 
 Let us count the number of integer points contained in $\sigma_1\cap\sigma_i$. Consider any integer point $p\in \sigma_1\cap\sigma_i$. 
Since $\sigma_i=\sigma_1 +{\bf v}_i$ and $\sigma_1$ is centered at the origin, $\sigma_i$ is centered at ${\bf v}_i$.
As a result, for any $j\in[i-1]$,  the $j$th coordinate of  $p$ is fixed at $s(j)$. The value of each of the remaining $(d-i+1)$ coordinates of $p$ has three possibilities from $\{-1,0,1\}$. Therefore, $\sigma_1\cap\sigma_i$  contains $3^{(d-i+1)}$ integer points. Because of the above argument, observe that all the integer points that belong to $\sigma_1\cap\sigma_i$ are also contained in  $\sigma_1\cap\sigma_j$, where $j\in[i-1]$. Hence, $Q_i$ contains $3^{(d-i+1)}$ integer points. \end{proof}


\section{Unit Covering Problem}\label{sec:cover}
Recall that by interchanging the role of unit objects and points, one can formulate an equivalent  \emph{online unit covering problem}.
Here, the points belonging to $\mathbb{R}^d$ arrive one by one. Upon the arrival of an uncovered point, we need to cover it using a unit object having center in $\mathbb{Z}^d$. 
Similar to the online hitting set problem, here, the decision to add a unit object is  irrevocable, i.e., the online algorithm can not remove  any unit object in the future from the existing cover. 
The aim of the online unit covering problem is to minimize the number of unit objects to cover all the presented points.
Since the above-mentioned unit covering problem is an equivalent version of the online hitting set problem studied in this paper, all the results obtained for the online hitting set problem are also valid for the equivalent online unit covering problem.
We summarize the results obtained for the online unit covering problem as follows. First, we present the lower bound of the online covering problem.
Due to Theorems~\ref{hyp_lb} and~\ref{ball_lb}, we obtain the following.
\begin{corollary} The  competitive ratio of every deterministic online algorithm  for covering points in $\mathbb{R}^d$ using
    \begin{itemize}
    \item [(i)] unit hypercubes centered at $\mathbb{Z}^d$ is at least~$d+1$, where $d\in\mathbb{N}$.
    \item [(ii)] unit balls centered at $\mathbb{Z}^d$ is at least~$d+1$, where $d<4$.
    \end{itemize}
\end{corollary}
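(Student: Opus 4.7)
The plan is to derive the corollary directly from the equivalence between the online hitting set problem and the online unit covering problem that is explained in Section~\ref{sec:cover}. The key symmetry I will exploit is that for any centrally symmetric unit object (both unit balls under $L_2$ and unit hypercubes under $L_\infty$ have this property), a point $c \in \IR^d$ is covered by a unit object centered at $h \in \mathbb{Z}^d$ if and only if $h$ lies inside the translate of that unit object centered at $c$. In symbols, $c \in B_d(h,1) \iff h \in B_d(c,1)$, and analogously for $H_d$.

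Using this, I will translate the adversarial constructions from Theorems~\ref{ball_lb} and~\ref{hyp_lb} directly into lower bound instances for covering. Concretely, for part~(i), I would take the adversarial sequence $\sigma_1, \sigma_2, \ldots, \sigma_{d+1}$ of unit hypercubes from the proof of Theorem~\ref{hyp_lb}, centered at points $c_1, c_2, \ldots, c_{d+1} \in \IR^d$, and present the sequence of query points $c_1, c_2, \ldots, c_{d+1}$ to the online covering algorithm. Whenever the covering algorithm selects a unit hypercube centered at some $h_i \in \mathbb{Z}^d$ to cover $c_i$, this is equivalent, via the symmetry above, to a hitting algorithm choosing $h_i$ to hit $\sigma_i$. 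Thus the covering algorithm must introduce $d+1$ distinct centers, while the offline optimum—which in the hitting instance uses a single integer point lying in $\bigcap_i \sigma_i$—translates to a single unit hypercube (centered at that point) covering all $c_i$. The argument for part~(ii) is identical, with balls replacing hypercubes and restricting to $d < 4$ so that Theorem~\ref{ball_lb} applies.

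The one subtlety to check carefully is that the \emph{online} nature of the reduction is preserved: the adversary in Theorems~\ref{hyp_lb} and~\ref{ball_lb} chooses $\sigma_i$ adaptively based on the hitting points $h_1, \ldots, h_{i-1}$, so in the covering setting the adversary likewise chooses the next query point $c_i$ based on the centers $h_1, \ldots, h_{i-1}$ previously selected by the covering algorithm. This is legitimate because the symmetry identification is bijective and per-round, so an online covering algorithm induces an online hitting algorithm on the dual instance, and any irrevocable decision in one corresponds to an irrevocable decision in the other. Since no new geometric work is needed, I expect no real obstacle here; the only care required is to state the correspondence cleanly and observe that the adversarial strategy carries over round-by-round.
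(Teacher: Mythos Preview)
Your proposal is correct and follows the same approach as the paper: the corollary is obtained directly from Theorems~\ref{hyp_lb} and~\ref{ball_lb} via the equivalence between the online hitting set and online unit covering problems described in Section~\ref{sec:cover}. You have simply made explicit the central-symmetry bijection (and its round-by-round compatibility with adaptive adversaries) that the paper asserts without proof.
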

\noindent
Now, we present the obtained upper bounds.
Due to Theorems~\ref{thm:int}-~\ref{2d-balls}, we have the following.
\begin{corollary} 
    \begin{itemize}
    \item [(i)] For covering points in $\IR^d$ using unit hypercubes centered at $\mathbb{Z}^d$, where $d=1,2$ and 3, respectively, there exist  deterministic online algorithms having  competitive ratios at most 2,4 and 8, respectively.
   
     \item [(ii)] For covering points in $\IR^d$ using unit balls centered at $\mathbb{Z}^d$, where $d=2$ and 3, respectively, there exist deterministic online algorithms having competitive ratios of at most~4 and 14, respectively.
    \end{itemize}
\end{corollary}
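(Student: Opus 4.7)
The plan is to obtain the corollary directly from the theorems for the hitting set problem by invoking the duality between the two problems, which was already informally described at the beginning of Section~\ref{sec:cover}. The starting point is the observation that both unit balls and unit hypercubes are centrally symmetric objects: for any such object $\sigma^*$ centered at the origin, a point $p\in\mathbb{R}^d$ lies in a translate $\sigma^*+c$ (centered at $c$) if and only if the point $c$ lies in the translate $\sigma^*+p$ (centered at $p$). Thus hitting a translate of $\sigma^*$ by $p$ is the same event as covering $c$ by the translate of $\sigma^*$ centered at $p$.

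Using this, I would set up an explicit reduction as follows. Given any instance $\beta$ of the online covering problem with points $c_1,c_2,\ldots\in\IR^d$ arriving one by one, to be covered by translates of $\sigma^*$ centered at $\mathbb{Z}^d$, I would feed the auxiliary online hitting set instance $\beta'$ whose $i$th object is the translate $\sigma^*+c_i$ into the hitting-set algorithm $\A$ guaranteed by the appropriate hitting-set theorem. Whenever $\A$ decides to place an integer point $p\in\mathbb{Z}^d$, the covering algorithm selects the translate $\sigma^*+p$ (centered at $p$); by central symmetry this translate covers exactly those previously arrived points $c_i$ whose corresponding objects $\sigma^*+c_i$ contained $p$. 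Irrevocability is preserved since each algorithm only adds to its solution. Conversely, every valid offline covering solution for $\beta$ yields a hitting set for $\beta'$ of the same cardinality (and vice versa), so $\mathrm{OPT}(\beta)=\mathrm{OPT}(\beta')$. Therefore the competitive ratio of the induced covering algorithm is at most that of $\A$.

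Applying this reduction case by case finishes the proof. For part~(i), the $d=1$ bound of $2$ comes from Theorem~\ref{thm:int}, the $d=2$ bound of $4$ from Theorem~\ref{square_ub}, and the $d=3$ bound of $8$ from Theorem~\ref{cube_ub}. For part~(ii), the $d=2$ bound of $4$ comes from Theorem~\ref{2d-balls} and the $d=3$ bound of $14$ from Theorem~\ref{3d-ball}. Since the reduction is bound-preserving, the stated competitive ratios transfer verbatim. There is no significant technical obstacle here; the only point that needs attention is to verify that the reduction is properly online (it is, since the $i$th object of $\beta'$ depends only on $c_i$) and that central symmetry is actually used (it is essential: it is what lets a hitting point $p\in\mathbb{Z}^d$ translate into a valid covering center in $\mathbb{Z}^d$ rather than some shifted copy thereof).
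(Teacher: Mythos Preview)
Your proposal is correct and follows essentially the same approach as the paper: the paper simply states that the covering problem is the equivalent (dual) version of the hitting set problem and then cites Theorems~\ref{thm:int}, \ref{square_ub}, \ref{cube_ub}, \ref{2d-balls}, and \ref{3d-ball}. Your write-up is in fact more explicit than the paper's, since you spell out the central-symmetry argument that makes the reduction work, whereas the paper only asserts the equivalence informally at the start of Section~\ref{sec:cover}.
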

Due to Theorems~\ref{hyp_ub} and~\ref{ball_ub}, we have the following.
\begin{corollary} For covering points in $\IR^d$ using
    \begin{itemize}
    \item [(i)] unit hypercubes centered at $\mathbb{Z}^d$, there exists a randomized algorithm whose competitive ratio is~$O(d^2)$, where $d\geq3$.
    \item [(ii)] unit balls centered at $\mathbb{Z}^d$, there exists a deterministic online algorithm whose competitive ratio is~$O(d^4)$, where $d\in\mathbb{N}$.
   
    \end{itemize}
\end{corollary}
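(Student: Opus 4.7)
The plan is to exhibit the point/object duality that the paper uses implicitly throughout Section~\ref{sec:cover}, and then simply transfer the two theorems. The key observation is that both balls and hypercubes are centrally symmetric around their centers, so for every $c,p\in\IR^d$ we have
\[
p\in B_d(c,1)\iff c\in B_d(p,1)\qquad\text{and}\qquad p\in H_d(c,1)\iff c\in H_d(p,1).
\]
This immediately converts an instance of the online unit covering problem into an instance of the online hitting set problem studied earlier.

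More precisely, suppose an adversary presents a sequence of points $p_1,p_2,\ldots\in\IR^d$ for the covering problem. I would simulate, on the fly, the online hitting set algorithm of Theorem~\ref{ball_ub} (resp.\ Theorem~\ref{hyp_ub}) on the transformed sequence of objects $B_d(p_i,1)$ (resp.\ $H_d(p_i,1)$). Whenever the simulated hitting set algorithm picks an integer point $q\in\mathbb{Z}^d$ to hit a newly arrived object, the covering algorithm chooses the unit ball $B_d(q,1)$ (resp.\ hypercube $H_d(q,1)$) as its new cover element. By the centrally symmetric equivalence above, the chosen unit object covers exactly those incoming points $p_i$ whose associated balls (resp.\ hypercubes) are hit by $q$; hence the produced cover is feasible, and it uses exactly as many objects as the simulated hitting algorithm uses points.

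The same duality applies to the offline optima: a minimum hitting set for $\{B_d(p_i,1)\}$ (resp.\ $\{H_d(p_i,1)\}$) with points in $\mathbb{Z}^d$ is in bijection with a minimum cover of $\{p_i\}$ by unit balls (resp.\ hypercubes) centered at $\mathbb{Z}^d$ of the same cardinality. Therefore the competitive ratio is preserved verbatim, giving $O(d^2)$ for part~(i) via Theorem~\ref{hyp_ub} and $O(d^4)$ for part~(ii) via Theorem~\ref{ball_ub}. For the randomized bound in~(i), the expectation in the definition of competitive ratio carries through the bijection as well, since the reduction is deterministic and the random coins are those of the underlying hitting set algorithm.

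There is no real obstacle here: the whole argument rests on central symmetry, which holds for both shapes considered. The only point worth stating carefully is that the reduction is \emph{online}, since the transformation $p_i\mapsto B_d(p_i,1)$ (or $H_d(p_i,1)$) is performed one request at a time without look-ahead, so no additional information is used beyond what the hitting set algorithm already sees.
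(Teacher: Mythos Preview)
Your proposal is correct and follows exactly the duality the paper relies on: the paper's own ``proof'' is just the one-line reference ``Due to Theorems~\ref{hyp_ub} and~\ref{ball_ub}, we have the following,'' together with the remark at the start of Section~\ref{sec:cover} that interchanging points and objects turns the online hitting set problem into the online unit covering problem. You have simply spelled out the central-symmetry bijection and the preservation of optima and expectations more explicitly than the paper does.
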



\section{Conclusion}\label{Conclusion}
In this paper, we have considered the online hitting set problem for unit balls and unit hypercubes in $\mathbb{R}^d$ using integer points in $\mathbb{Z}^d$. 
On the one hand, we obtain almost tight bounds on the competitive ratio in the lower dimension. On the other hand,  there is a significant gap between the lower and upper bound of the competitive ratio in higher-dimensional cases. We propose the following open problems.
\begin{enumerate}
    \item Can the lower bound result of unit balls be extended to any $d\in \mathbb{N}$?  
    \item Is there a lower bound on the competitive ratio for hitting unit hypercubes that match the upper bound of the problem? Is there any algorithm for hitting unit hypercubes with a competitive ratio of at most $O(d)$?
    \item There are small gaps between the lower and the upper bounds for unit balls and unit hypercubes in $\IR^2$ and $\IR^3$. We propose bridging these gaps as a future direction of research.
\end{enumerate}
 \bibliographystyle{plain} 
\bibliography{references}

\end{document}